\newtheorem{theorem}{Theorem}
\newtheorem{lemma}{Lemma}
\newtheorem{remark}{Remark}
\newcommand{\supp}{\mathop{\mathrm{supp}}}
\definecolor{rick}{RGB}{0,139,0}
\title{On the Gaussian Multiple Access Wiretap Channel and the Gaussian Wiretap Channel with a Helper: Achievable Schemes and Upper Bounds}
\author{
\IEEEauthorblockN{Rick Fritschek and Gerhard Wunder\\}
\IEEEauthorblockA{Heisenberg Communications and Information Theory Group\\
    Freie Universit\"at Berlin, \\
    Takustr. 9,
    D--14195 Berlin, Germany\\
    Email: rick.fritschek@fu-berlin.de, g.wunder@fu-berlin.de
    \thanks{This paper was presented in part at the ISIT 2016 \cite{FW16}, and in the preprint \cite{FW17axvMACWT}. }}
}
\begin{document}

\maketitle
\begin{abstract}
We study deterministic approximations of the Gaussian two-user multiple access wiretap channel (G-MAC-WT) and the Gaussian wiretap channel with a helper (G-WT-H). These approximations enable results beyond the recently shown 2/3 and 1/2 secure degrees of freedom (s.d.o.f.) for the G-MAC-WT and the G-WT-H, respectively. While the s.d.o.f. were obtained by real interference alignment, our approach uses signal-scale alignment. We show achievable schemes which are independent of the rationality of the channel gains. Moreover, our results can differentiate between channel strengths, in particular between both users, and will establish secrecy rates dependent on this difference. We can show that the resulting achievable secrecy rates tend to the s.d.o.f. for vanishing channel gain differences. Moreover, we extend previous and develop new techniques to prove generalized s.d.o.f. bounds for varying channel strengths and show that our achievable schemes reach the bounds for certain channel gain parameters. We believe that our analysis is the next step towards a constant-gap analysis of the G-MAC-WT and the G-WT-H. 
\end{abstract}

\section{Introduction}
The wiretap channel was first proposed by Wyner in \cite{Wyner75}, and solved in its degraded version. This result was later extended to the general wiretap channel by Csiszar and K\"orner in \cite{CsizarKoerner}. Moreover, the Gaussian equivalent was studied by Leung-Yan-Cheon and Hellman in \cite{Hellman}. The wiretap channel and its modified version served as an archetypical channel for physical-layer security investigations. However, in recent years, the support of multiple users became increasingly important. A straightforward extension of the wiretap channel to multiple users was done in \cite{TekinYenerGMAC-WT}, where the Gaussian multiple access wiretap channel (G-MAC-WT) was introduced. A general solution for the secure capacity of this multi-user wiretap set-up was out of reach and investigations focused on the secure degrees of freedom (s.d.of.) of these networks. Degrees of freedom are used to gain insights into the scaling behaviour of multi-user channels in comparison to a single-link scenario. They measure the capacity of the network, normalized by the single-link power, as power goes to infinity. This also means that the d.o.f. provide an asymptotic view on the problem at hand. This simplifies the analysis and enables asymptotic solutions of channel models where no finite power capacity results could be found. A disadvantage of the d.o.f. is that they do not incorporate channel gain differences, as the limiting process is only on the the signal power itself. This limits the insights from the d.o.f. about the underlying capacity region, as those channel gain differences usually play an important role in multi-user scenarios, for example in the Gaussian interference channel or the Gaussian multiple access channel. An example for a technique which yields d.o.f. achievability results is real interference alignment \cite{motahari2014real}. It uses integer lattice transmit constellations which are scaled such that alignment can be achieved. The intended messages are recovered by minimum-distance decoding and the minimum distance is analysed and bounded by usage of the Khintchine-Groshev theorem of Diophantine approximation theory~\cite[Appendix A]{motahari2014real}. The disadvantage of the method is that these results only hold for almost all channel gains. This is unsatisfying for secrecy purposes since it leaves an infinite amount of cases where the schemes do not work, e.g. rational channel gains. Moreover, secrecy should not depend on the accuracy of channel measurements. Real interference alignment is part of a broader class of interference alignment strategies. Interference alignment (IA) was introduced in \cite{CadambeJafarIA} and \cite{Maddah-AliKhandi-IA}, among others, and its main idea is to design signals such that the interference overlaps (aligns) and therefore uses fewer signal dimensions. The resulting interference-free signal dimensions can be used for communication. IA methods can be divided into two categories, namely the vector-space alignment approach and the signal-scale alignment approach \cite{Niesen-Ali}. 
The former uses the classical signalling dimensions of time, frequency and multiple-antennas for the alignment, while the latter uses the signal strength for alignment. Real interference alignment and signal-strength deterministic models are examples for signal-scale alignment. Signal-strength deterministic models are based on an approximation of the Gaussian channel. An example for such an approximation is the linear deterministic model (LDM), introduced by Avestimehr et al. in \cite{Avestimehr2011}. It is based on a binary expansion of the transmit signal, and an approximation of the channel gain to powers of two. The resulting binary expansion gets truncated at the noise level which yields a noise-free binary signal vector and makes the model deterministic. The next step towards constant-gap capacity results are the {\it generalized} d.o.f., first investigated in \cite{Etkin2008}, which study the limiting process of the signal-to-noise ratio to infinity. The difference to d.o.f. results is, that channel gain differences are taken into account. The g.d.o.f. can therefore add valuable insights about the structure of the underlying capacity region. The linear deterministic model yields results which directly correspond to the g.d.o.f. of various Gaussian channels (i.e. \cite{Bresler2008}, \cite{Bresler2010}, \cite{Suvarup2011}, \cite{FW17b}). Moreover, in those examples the capacity can be approximated by the LDM such that it is within a constant bit-gap of the Gaussian channel. This is due to layered lattice coding schemes which can be used to transfer the achievable scheme to the Gaussian model. Moreover, converse proofs for the deterministic models can be often translated to the Gaussian case and are therefore helpful for constant-gap results.

{\bf Previous and related work:} 
Previous work on the wiretap channel in multi-user settings mainly utilized the real IA approach in addition to cooperative jamming, introduced in \cite{TekinYenerCoopJam}. The idea of using IA in a secrecy context is to cooperatively jam the eavesdropper, while aligning the jamming signal in a small subspace at the legitimate receiver. This resulted in a sum s.d.o.f characterization of $\tfrac{K(K-1)}{K(K-1)+1}$ for the $K$-user case of the G-MAC-WT in \cite{XieUlukusOneHop}. 
The idea is that the users can allocate a small part of the signalling dimensions with uniformly distributed random bits. Those random bits are send such that they occupy a small space at the legitimate receiver, while overlapping with the signals at the eavesdropper. A specialized model is the Gaussian wiretap channel with a helper (G-WT-H). This model consists of the standard wiretap channel model, with a second independent user, whose only purpose is to jam the eavesdropper. In \cite{XieUlukusWiretap-Helper} and \cite{XieUlukusWiretap-Helper2}, the real IA approach was used on the G-WT-H (with and without channel state information, respectively) to investigate the s.d.o.f, therefore achieve results for the infinite power regime. They showed that the sum s.d.o.f is $\tfrac{1}{2}$ for the $2$-user case. Another branch of recent work \cite{Comp_forward_mac_wt} approached the problem, using a compute-and-forward decoding strategy, which leads to results for the finite power regime that are optimal in an s.d.o.f sense. The next step is to transition from the s.d.o.f. results, to a secure constant-gap capacity result. A promising approach is to study linear deterministic approximations to gain insights leading to constant-gap capacity approximations. This approach has been used for example for wiretap channels in \cite{Shamai12,VogtSezgin14}, for relay networks \cite{PeronDiggaviTelatar09} and for IC channels \cite{MM13}, \cite{Vogt16}.  It was also recently used in \cite{LeeKhisti17} for an s.d.o.f. analysis of the Gaussian diamond-wiretap channel, which is a multi-hop version of the G-MAC-WT. 

{\bf Contributions:}
We follow the deterministic approximation approach and investigate the G-WTH-H and G-MAC-WT models. For that we assume perfect knowledge of the magnitude of all channel gains in the network. We show achievable rate results for the G-WT-H for general channel gain strengths and a finite SNR regime, independent of the channel gain being rational or not. Moreover, we develop a converse proof which shows a constant-gap for certain channel gain parameter ranges. The upper bound converges to the s.d.o.f bound for vanishing channel gain differences. These results were already shown in the conference version of this publication \cite{FW16}. The present work extends those results by providing another more elegant achievability proof and the full converse proof of the G-WT-H without the assumption of an uniform input distribution. Both of our results on the G-WT-H combined give insights into the secure g.d.o.f. (see. Section~\ref{sgdof}) and provide tools and upper bounds for future constant-gap capacity results.  Moreover, we use the same alignment methods to show an achievable scheme for the linear deterministic MAC-WT (LD-MAC-WT) which is dependent on channel gain differences and therefore gives insights into the secure g.d.o.f. as well. We show that both achievable schemes can be translated to the Gaussian channel models, by using layered lattice codes to imitate bit-levels. We extend the converse proof of \cite{XieUlukusOneHop} for the G-MAC-WT d.o.f towards general channel gain strengths, to match our achievable scheme for certain channel parameter ranges. For that, we combine previous techniques with new novel techniques to translate the results of both converse proofs to the Gaussian channel.
\begin{figure}
\centering
\includegraphics[scale=0.82]{./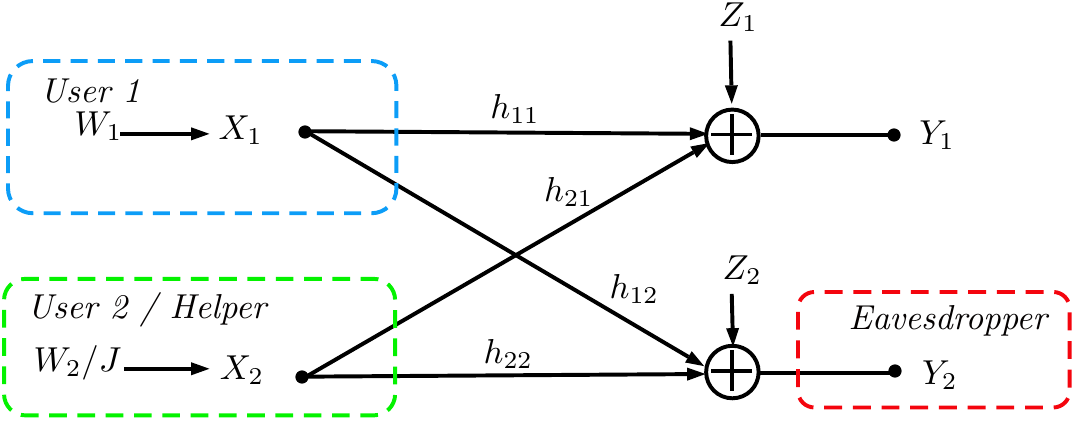}
\caption{The Gaussian multiple access wiretap channel and the Gaussian wiretap channel with one helper. The difference between both models is, that in the former, user 2 wants to send a message ($W_2$), while in the later model, he only helps user 1 by jamming with $J$.}
\label{System_Model_Figure}
\end{figure}
\label{System model}
\section{System Model}
The Gaussian multiple access wiretap channel (G-MAC-WT) and the Gaussian wiretap channel with one helper (G-WT-H) are defined as a system consisting of 2 transmitters and 2 receivers, as shown in Fig.~\ref{System_Model_Figure}, where $X_1,X_2 \in \mathbb{R}$ are the channel inputs of both users, communicating with the legitimate receiver with channel output $Y_1$ or jamming the eavesdropper, with channel output $Y_2$.
The channel itself is modelled with additive white Gaussian noise, $Z_1,Z_2\sim \mathcal{N}(0,1)$. Therefore, the system equations can be written as 
\begin{IEEEeqnarray}{rCl}\label{Gauss_Model_Helper}
Y_1&=&h_{11}X_1+h_{21}X_2+Z_1\IEEEyessubnumber,\\
Y_2&=&h_{22}X_2+h_{12}X_1+Z_2\IEEEyessubnumber,
\end{IEEEeqnarray}
where the channel inputs satisfy an average transmit power constraint $E\{X_i^2 \} \leq P_i$ for each $i$. The channel gains from user $i$ to receiver $k$ are denoted by $h_{ik}$. Let $|h_{11}|^2P_1=\text{SNR}_1$ and $|h_{21}|^2P_2=\text{SNR}_2$ represent the received average power at $Y_1$ of both direct signals. We assume that both signals are received at $Y_2$ with the same average power and therefore $h_{12}=h_{22}=h_E$ and $P_1=P_2=P$ which gives $|h_E|^2P=\text{SNR}_E$.\footnote{This will reduce the number of cases and therefore simplify the analysis. However, the following techniques also work without this assumption. See  also remark 3.} We introduce the two parameters $\beta_1$ and $\beta_2$, which connect the $\text{SNR}$ ratios with $\text{SNR}_2=\text{SNR}_1^{\beta_1}$ and $\text{SNR}_E=\text{SNR}_1^{\beta_2}$.  The difference between the G-MAC-WT and G-WT-H is, that in case of the G-WT-H, user $2$ is just helping user $1$ by independently jamming both receivers to achieve a secure communication. In the case of the G-MAC-WT model, both users want to transmit information to $Y_1$ and are able to use jamming.

\subsubsection{G-WT-H}
A $(2^{nR},n)$ code will consist of an encoding and a decoding function. The encoder assigns a codeword $x_1^n(w)$ to each message $w$, where $W$ is uniformly distributed over the set $[1:2^{nR}]$, and the associated decoder assigns an estimate $\hat{w}\in [1:2^{nR}]$ to each observation of $Y_1^n$. A secure rate $R$ is said to be achievable if there exist a sequence of $(2^{nR},n)$ codes which satisfy a probability of error constraint $P^{(n)}_e\!=P\!(\hat{W}\!\neq\! W)\leq\epsilon$ as well as a secrecy constraint 
\begin{equation}
\tfrac{1}{n} H(W|Y_2^n) \geq \tfrac{1}{n} H(W)-\epsilon,
\label{security_constraint}
\end{equation}
which gives $I(W;Y_2^n)\leq \epsilon n$ where $\epsilon\rightarrow 0$ for $n\rightarrow \infty$.
A message $W$ is therefore information-theoretically secure if the eavesdropper cannot reconstruct it from the channel observation $Y_2^n$. This means that the uncertainty of the message is almost equal to its entropy, given the channel observation.
\subsubsection{G-MAC-WT}
A $(2^{nR_1},2^{nR_2},n)$ code for the multiple access wiretap channel will consist of a message pair $(W_1,W_2)$ uniformly distributed over the message set $[1:2^{nR_1}]\times[1:2^{nR_2}]$ with a decoding and two randomized encoding functions. Encoder $1$ assigns a codeword $X_1^n(w_1)$ to each message $w_1\in [1:2^{nR_1}]$, while the encoder $2$ assigns a codeword $X_2^n(w_2)$ to each message $w_2\in [1:2^{nR_2}]$. The decoder assigns  an estimate $(\hat{w_1},\hat{w_2})\in [1:2^{nR_1}]\times [1:2^{nR_2}]$ to each observation of $Y_1^n$. A secure rate pair ($R_1,R_2$) is said to be achievable if there exist a sequence of $(2^{nR_1},2^{nR_2},n)$ codes, which satisfy a reliability constraint, i.e. probability of error such that: $P^{(n)}_e=P[(\hat{W_1},\hat{W_2})\neq (W_1,W_2)]\leq\epsilon$  and a security constraint for both messages $W_1,W_2$:
\begin{equation}
\tfrac{1}{n} H(W_1,W_2|Y_2^n) \geq \tfrac{1}{n} H(W_1,W_2)-\epsilon,
\label{security_constraint2}
\end{equation}
which gives $I(W_1,W_2;Y_2^n)\leq \epsilon n$, where $\epsilon\rightarrow 0$ for $n\rightarrow \infty$. In particular, for the G-MAC-WT model, we are interested in the secure {\it sum}-rate $R_\Sigma:=R_1+R_2$.

\subsection{Secure Generalized Degrees of Freedom}
\label{sgdof}
The following example is from \cite[Section 2]{Bresler2008} and we encourage the reader to look into that publication for a more in-depth discussion of degrees-of-freedom. Lets have a look at the motivating example of the multiple access channel. The channel model is defined by
\begin{equation*}
Y=h_1X_1+h_2X_2+Z,
\end{equation*}
where the channel gains are $h_1,h_2\in \mathbb{C}$ and the input signals have the power constraint $E\{|X_i|^2\}\leq P$. Moreover, the channel has an additive Gaussian noise $Z\sim \mathcal{CN}(0,1)$. The model is parametrized by the signal-to-noise ratios $\text{SNR}_1:=|h_1|^2P$ and $\text{SNR}_2:=|h_2|^2P$. The capacity region of this channel model is
\begin{IEEEeqnarray*}{rCl}
R_1 &\leq& \log(1+\text{SNR}_1)\approx \log \text{SNR}_1,\\
R_2 &\leq& \log(1+\text{SNR}_2)\approx \log\text{SNR}_2,\\
R_1+R_2 &\leq& \log(1+\text{SNR}_1+\text{SNR}_2)\approx \text{SNR}_1,
\end{IEEEeqnarray*} where we assumed that $\text{SNR}_1>\text{SNR}_2$. The degrees-of-freedom are a way to simplify the analysis by looking into the scaling behaviour of multi-user channels in comparison to a single-link channel. They are defined as follows
\begin{equation}
D:= \lim_{P\rightarrow\infty} \frac{C(h_1,h_2,P)}{\log P}.
\label{dof}
\end{equation}
The reason behind this formula is that the single link capacity of the AWGN channel is 
\begin{equation}
C(\text{SNR})=\log (1+\text{SNR})\approx \log \text{SNR},
\end{equation} and the d.o.f therefore provide a scaling of the multi-user channel in comparison to the single-link capacity, where power goes to infinity. The d.o.f for the MAC are therefore \begin{equation*}
d_1\leq 1,\qquad d_2\leq 1 \qquad \text{and} \qquad d_1+d_2\leq 1.
\end{equation*}
One can see that the d.o.f are independent of the channel gains and do not reflect subtleties in the capacity region. In \cite{Etkin2008}, a more sophisticated approach, coined generalized d.o.f, was suggested. There, the ratio between both signal-to-noise ratios is a fixed constant such that $\text{SNR}_1^\alpha=\text{SNR}_2$. The g.d.o.f. are then defined as
\begin{equation*}
D(\alpha):= \lim_{\text{SNR}_1 \rightarrow \infty} \frac{C(\text{SNR}_1,\text{SNR}_1^\alpha)}{\log \text{SNR}_1}.
\end{equation*}
Notice that the scale difference of both received signals directly influences the generalized d.o.f. Plugging in the capacity region of the MAC results in 
\begin{equation*}
d_1\leq 1,\qquad d_2\leq\alpha, \qquad\text{and} \qquad d_1+d_2\leq 1,
\end{equation*} which preserves the subtleties of the finite SNR capacity region. The g.d.o.f. therefore provide valuable insights on the behaviour of the capacity region as a function of the channel gain differences. Now, the secure d.o.f. are defined as in \eqref{dof}, with the secure capacity instead, which is the supremum of all achievable secrecy rates. For example, the s.d.o.f. of the G-WT-H are $D_s\leq \tfrac{1}{2}$. Now, it could be the case that most of the signal communication between the user and the legitimate receiver is secure simply because that part vanishes under the noise floor at the eavesdropper, which is not reflected in the s.d.o.f. It is therefore important to get insights into the capacity behaviour in dependence on channel gains with the secure g.d.o.f. The secure g.d.o.f. for a complex valued channel model are defined as
\begin{equation}
D_s(\beta_1,\beta_2):= \lim_{\text{SNR}_1\rightarrow \infty} \frac{C_s(\text{SNR}_1,\text{SNR}_1^{\beta_1},\text{SNR}_1^{\beta_2})}{\log \text{SNR}_1}.
\end{equation} We will look into real valued channels, which changes the scaling to $\tfrac{1}{2}\log \text{SNR}_1$.

\subsection{The Linear Deterministic Approximation}

As simplification, we will investigate the corresponding linear deterministic model (LDM)\cite{Avestimehr2011} of the system models as an intermediate step. For the LDM, inputs are assumed to have a unit average power constraint, while the channel gains represent the SNR, such that $h=\sqrt{\text{SNR}}\approx 2^n$. The LDM models the signals of the channel as bit-vectors $\mathbf{X}$ or equivalently as a succession of bits in a scalar $x=0.b_1b_2b_3\ldots$, which is achieved by a binary expansion of the real-valued input signal $X$. The positions within the bit-vector are referred to as bit-levels. The channel gain $2^n$ shifts the bits of a scalar input for $n$-positions over the decimal point, such that we have $2^nx=b_1b_2\ldots b_n. b_{n+1}b_{n+2}\ldots$ The noise only affects the bits on the right hand side of the decimal point $2^nx+z=b_1b_2\ldots b_n. \underbar{$b$}_{n+1}\underbar{$b$}_{n+2}\ldots$ denoted as $\underbar{$b$}_{n+i}$. The deterministic approximation cuts of the noise effected bits after the shift, which results in $y\approx b_1b_2\ldots b_n$. This truncation at the lowest level (noise level) models the signal impairment of the Gaussian noise, which yields a deterministic approximation of the Gaussian model. Viewing this in the equivalent algebraic notation, channel gains are represented by shifting the input bit-vector for an appropriate number of bit-levels down. The shift is introduced by a $q\times q$ shift-matrix $\mathbf{S}$, which is defined as
\begin{equation}
\mathbf{S}=\begin{pmatrix}
0 & 0 &  \cdots & 0 & 0\\
1 & 0 &  \cdots & 0 & 0\\
0 & 1 &  \cdots & 0 & 0\\
\vdots & \vdots & \ddots & \vdots & \vdots \\
0 & 0 &  \cdots & 1 & 0\\
\end{pmatrix}.
\end{equation}
With $\mathbf{S}$, an incoming bit vector can be shifted for $q-n$ positions with $\mathbf{Y}=\mathbf{S}^{q-n}\mathbf{X}$. Therefore, the channel gain is now represented by $n$ bit-levels which corresponds to $\lfloor\tfrac{1}{2}\log \text{SNR}\rfloor$ of the original channel. Furthermore, superposition of different signals is modelled by binary addition on the bit-levels, i.e. element-wise addition modulo-two. Carry over is not used to limit the superposition on the specific level where it occurs.
With these definitions, the model can be written as
\begin{IEEEeqnarray}{rCl}
\mathbf{Y}_1&=&\mathbf{S}^{q-n_{11}}\mathbf{X}'_1\oplus\mathbf{S}^{q-n_{21}}\mathbf{X}'_2\IEEEyessubnumber,\\
\mathbf{Y}_2&=&\mathbf{S}^{q-n_{22}}\mathbf{X}'_2\oplus\mathbf{S}^{q-n_{12}}\mathbf{X}'_1\IEEEyessubnumber,
\label{LDM_Model}
\end{IEEEeqnarray}
where $q:=\max\{n_{11},n_{12},n_{21},n_{22}\}$. For ease of notation, we denote $\mathbf{X}_1=\mathbf{S}^{q-n_{11}}\mathbf{X}'_1$ and $\mathbf{X}_2=\mathbf{S}^{q-n_{21}}\mathbf{X}'_2$.
Furthermore, we denote $\mathbf{S}^{q-n_{22}}\mathbf{X}'_2$ and $\mathbf{S}^{q-n_{12}}\mathbf{X}'_1$ by $\bar{\mathbf{X}}_2$ and $\bar{\mathbf{X}}_1$, respectively. We also include the assumption on the symmetry in the channel gains at the eavesdropper, which leads to $n_{22}=n_{12}=:n_E$, and denote $|n_{1}-n_{2}|=:n_\Delta$ with $n_{11}=:n_{1}$ and $n_{21}=:n_2$. 
We can therefore rewrite the deterministic channel model as
\begin{IEEEeqnarray}{rCl}
\mathbf{Y}_1&=&\mathbf{S}^{q-n_{1}}\mathbf{X}'_1\oplus\mathbf{S}^{q-n_{2}}\mathbf{X}'_2=\mathbf{X}_1\oplus\mathbf{X}_2\IEEEyessubnumber,\\
\mathbf{Y}_2&=&\mathbf{S}^{q-n_{E}}\mathbf{X}'_2\oplus\mathbf{S}^{q-n_{E}}\mathbf{X}'_1=\mathbf{\bar{X}}_2\oplus\mathbf{\bar{X}}_1\IEEEyessubnumber.
\label{LDM_Model}
\end{IEEEeqnarray}
The resulting received bit-vectors of the channel model can be illustrated as shown in Fig.~\ref{wiretap-LDM}. There, one can see that for example the two bit-vectors $\mathbf{X}_1$ and $\mathbf{X}_2$ are received at $\mathbf{Y}_1$ with $n_1$ and $n_2$ bit-levels, respectively. The highest bit is at the top of the boxes, while the lowest bit is just above the noise level.
All schemes rely on a partition of the received signal of the legitimate receiver into a common ($\mathbf{Y}_{1,c}$) and a private ($\mathbf{Y}_{1,p}$) part. The common bits are the top 
\begin{equation}
n_c:=\min\{n_E+n_\Delta,\max\{n_{1},n_{2}\}\}
\label{common_part_alignment_scheme}
\end{equation} bits of $\mathbf{Y}_1$. And the private part consists of the bottom 
\begin{equation}
n_p:=(\max\{n_{1},n_{2}\}-n_c)^+
\end{equation} bits of $\mathbf{Y}_1$. We note that due to the bit-level shift, the last $n_\Delta$ bits of $\mathbf{X}_1$ in $\mathbf{Y}_{1,c}$ are actually private, see Remark~\ref{remark1} and Fig.~\ref{wiretap-LDM}.
To specify a particular range of elements in a bit-level vector we use the notation $\mathbf{a}_{[i:j]}$ to indicate that $\mathbf{a}$ is restricted to the bit-levels $i$ to $j$. Bit-levels are counted from top, most significant bit in the expansion, to bottom. If $i=1$, it will be omitted $\mathbf{a}_{[:j]}$, the same for $j\!=\!n$ $\mathbf{a}_{[i:]}$. We define the modulo operation as $a\bmod n:=a-\lfloor \tfrac{a}{n}\rfloor n$.

\begin{remark}
The assumption that $n_{22}=n_{12}=n_E$, i.e. the eavesdropper receives the signals with equal strength, does not influence the achievable secrecy sum-rate of the LD system. Consider a channel with $n_{22}\neq n_{12}$, for example $n_{22}>n_{12}$. The part of $\mathbf{X}'_2$ which is received above $n_{12}$ at the eavesdropper, $\bar{\mathbf{X}}_{2,[:n_{22}-n_{12}]}$, cannot be utilized since it cannot be jammed. One can therefore achieve the same rate by ignoring the top $n_{22}-n_{12}$ bits of $\mathbf{X}'_2$. The same argument holds for $n_{12}>n_{22}$.
\end{remark}

\section{Main Results}
\begin{figure}
\centering
\includegraphics[scale=0.9]{./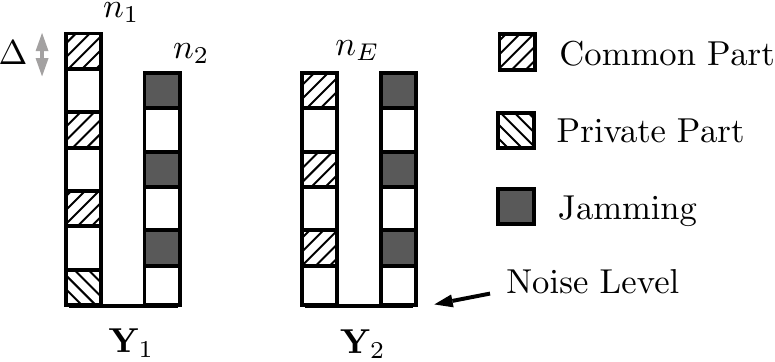}
\caption{The Gaussian wiretap channel with a helper in the linear deterministic model. The helper utilizes jamming, such that all used signal parts align with the jamming at the attacker ($Y_2$).}
\label{wiretap-LDM}
\end{figure}
\subsection{Results for the LD-WT-H}
We start with the achievable sum-rate of the linear deterministic wiretap channel with a helper. We need to differentiate between the cases $n_{1}\geq n_{2}$ and $n_{1}< n_{2}$, since the user and the helper have different roles and those cases lead to different schemes. The main idea is to deploy a jamming scheme such that the jamming signal parts of the helper overlap (align) with the used signal parts of the user, at $\mathbf{Y}_2$, while minimizing the overlap at the legitimate receiver $\mathbf{Y}_1$, see Fig.~\ref{wiretap-LDM}. This leads to the following result:
\begin{theorem}
\label{theorem_det}
An achievable secrecy rate $R$ of the linear deterministic wiretap channel with a helper is
\begin{equation}
R=n_p+\left\lfloor \frac{n_c}{2n_\Delta}\right\rfloor n_\Delta+Q
\end{equation}
where 
\begin{equation}\label{reminder_ld_helper}
Q=\begin{cases}
n_Q & \text{for } n_Q< n_\Delta, n_1\geq n_2
\\
n_\Delta & \text{for } n_Q \geq  n_\Delta, n_1\geq n_2\\
0 & \text{for } n_Q< n_\Delta, n_1< n_2\\
n_Q-n_\Delta & \text{for } n_Q \geq  n_\Delta, n_1< n_2
\end{cases}
\end{equation} with $n_Q:=n_c \bmod 2n_\Delta$.
\end{theorem}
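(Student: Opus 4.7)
The plan is to construct an explicit alignment scheme for user 1 and the helper in the LD-WT-H, and verify it satisfies the reliability and secrecy constraints. The approach uses block-level alignment, with the four $Q$ sub-cases arising from careful bookkeeping at the boundary of the last partial block.

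First, I would recall the key geometric fact for the LDM: at $\mathbf{Y}_2$, user 1's bit at level $k$ and the helper's bit at its own level $k$ land at the same position, while at $\mathbf{Y}_1$ they are offset by $n_\Delta$ bit-levels (with the sign of the offset depending on whether $n_1\geq n_2$ or $n_1<n_2$). This asymmetry is what the scheme exploits. The private term $n_p$ is then immediate: user 1's bits in $\mathbf{Y}_{1,p}$ correspond to levels above the eavesdropper's horizon, and if the helper is kept silent on the matching helper levels then those bits reach the legitimate receiver untouched and leak nothing.

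Second, I would tile the common part of $\mathbf{Y}_1$ into blocks of size $2n_\Delta$. In each full block, user 1 places $n_\Delta$ data bits in the half that happens to be clean at $\mathbf{Y}_1$ (upper half when $n_1\geq n_2$, lower half when $n_1<n_2$) and is silent in the other half; the helper places $n_\Delta$ i.i.d.\ uniform jamming bits at the matching helper levels and is silent elsewhere. At $\mathbf{Y}_2$ each data bit is XORed with an independent uniform jamming bit (one-time pad); at $\mathbf{Y}_1$ the $n_\Delta$-shift makes the jamming appear in the other half of the same block, where user 1 is silent, so reception is clean. This contributes the middle term $\lfloor n_c/(2n_\Delta)\rfloor\,n_\Delta$.

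Third, I would analyse the leftover $n_Q=n_c\bmod 2n_\Delta$ positions. The shift direction at $\mathbf{Y}_1$ decides whether the helper jamming needed to protect an extra data bit lands in a harmless place or on top of previously placed data. For $n_1\geq n_2$ the helper's image at $\mathbf{Y}_1$ is shifted \emph{downward}, so the extra bits of the partial last block have their jamming fall strictly below the common part (into the already-private part or past level $n_1$): all $n_Q$ extra bits survive when $n_Q<n_\Delta$, and a full $n_\Delta$ survive when $n_Q\geq n_\Delta$. For $n_1<n_2$ the image is shifted \emph{upward}, so the jamming of the remainder bits falls back into an earlier block and would overwrite user 1's clean data there; the first $n_\Delta$ remainder bits are hence unusable, giving $Q=0$ when $n_Q<n_\Delta$ and only $Q=n_Q-n_\Delta$ surviving when $n_Q\geq n_\Delta$. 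Reliability then follows by direct inspection, because every user-1 data bit sits in a $\mathbf{Y}_1$ position that no other transmitted bit occupies; and the secrecy constraint~\eqref{security_constraint} follows (with $\epsilon=0$) from the one-time-pad argument, since each eavesdropper-visible data bit is XORed with an independent uniform helper bit, making $\mathbf{Y}_2^n$ distributed independently of $W$. The main obstacle is the remainder accounting: the regime split $n_1\gtrless n_2$ is forced because it flips the direction in which the jamming image spills over, and within each regime one must track carefully whether the additional jamming of the partial block lands in the private part, past $n_1$, or inside an already-used data position.
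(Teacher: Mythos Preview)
Your scheme is essentially the paper's: tile the common part into $2n_\Delta$-blocks with user data in one half and Bernoulli$(\tfrac12)$ helper jamming on the matching bit indices (aligned at $\mathbf{Y}_2$, separated by the $n_\Delta$-shift at $\mathbf{Y}_1$), handle the private part and the leftover $n_Q$ bits by the same case split, and invoke the Crypto-Lemma (Lemma~\ref{Crypto_lemma}) for perfect secrecy exactly as the paper does.

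One detail to fix in your $n_1\geq n_2$ remainder case: if the helper's jamming really landed in $\mathbf{Y}_{1,p}$ it would corrupt those $n_p$ private data bits and your count $n_p+Q$ would no longer be achievable. What actually happens (cf.\ Remark~\ref{remark1}) is that when $n_c=n_E+n_\Delta$ the last $n_\Delta$ user bits of the common block are already invisible at $\mathbf{Y}_2$, so the remainder needs no jamming at all; and when $n_c=n_1$ the required helper jamming either sits in the unused second half of the remainder at $\mathbf{Y}_1$ or uses helper bit-indices beyond $n_2$ that never reach $\mathbf{Y}_1$. In every sub-case the jamming stays out of $\mathbf{Y}_{1,p}$, and with that correction your argument matches the paper's.
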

The proof is presented in the analysis section \ref{ach-scheme-det-model}.
This theorem says, that the secure rate is the sum of the private bits $n_p$, which are just received by the legitimate receiver, due to channel gain differences, plus half of the common bits $n_c$ and an additional term $Q$ accounting for the remainder bits of the partitioning scheme. We now present an upper bound to the sum-rate.
\begin{theorem}\label{THM-converse-LD-WT-Helper}
The secrecy rate $R$ of the linear deterministic wiretap channel with one helper and symmetric channel gains at the wiretapper is bounded from above by
\begin{equation*}
R \leq \min\{ r_{ub1}, r_{ub2}, r_{ub3}\}
\end{equation*}
with \begin{IEEEeqnarray*}{rCl}
r_{ub1}&=& n_p+\tfrac{1}{2}n_c+\tfrac{1}{2}(n_1-n_2)^+\\
r_{ub2}&=&n_{1}\\
r_{ub3}&=&n_{2}+(n_{1}-n_{2}-n_{E})^+\\
&& +\:[n_E-n_{2}-(n_E-n_{1}+n_{2})^+]^+
\end{IEEEeqnarray*}
\end{theorem}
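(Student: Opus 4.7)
}
The strategy is to establish the three listed upper bounds separately and then take their minimum. All three arguments start from Fano's inequality $H(W\mid \mathbf{Y}_1^n)\le n\epsilon_n$ together with the secrecy constraint $I(W;\mathbf{Y}_2^n)\le n\epsilon_n$, but they differ in how these are combined with the linear-deterministic structure.

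The bound $r_{ub2}=n_1$ is a cut-set / single-link bound: I would use Fano to write $nR\le I(W;\mathbf{Y}_1^n)+n\epsilon_n$ and then hand the legitimate receiver the full jamming sequence $\mathbf{X}_2^n$ as genie side information. Since $W\perp\mathbf{X}_2^n$, this can only enlarge the mutual information, and the channel collapses to the point-to-point LDM from $\mathbf{X}_1$ to $\mathbf{Y}_1\oplus\mathbf{X}_2$ whose per-symbol capacity is $n_1$.

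The bound $r_{ub1}$ is the main contribution. I would combine reliability and secrecy in the standard wiretap way,
\begin{equation*}
nR\le I(W;\mathbf{Y}_1^n)-I(W;\mathbf{Y}_2^n)+2n\epsilon_n\le I(\mathbf{X}_1^n;\mathbf{Y}_1^n\mid \mathbf{Y}_2^n)+2n\epsilon_n,
\end{equation*}
and rewrite the right-hand side as $H(\mathbf{Y}_1^n\mid\mathbf{Y}_2^n)-H(\mathbf{X}_2^n\mid\bar{\mathbf{X}}_2^n)+2n\epsilon_n$, using $\mathbf{Y}_1=\mathbf{X}_1\oplus\mathbf{X}_2$ and the fact that, given $\mathbf{X}_1^n$, knowing $\mathbf{Y}_2^n$ is equivalent to knowing $\bar{\mathbf{X}}_2^n$. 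Then I would split $\mathbf{Y}_1$ into its private top $n_p$ bit levels (invisible at the eavesdropper, contributing at most $n_p$ freely) and its common $n_c$ bit levels (shared with $\mathbf{Y}_2$). On the common part the key step is to exploit the symmetry $n_{12}=n_{22}=n_E$ at the eavesdropper via a sub-additivity argument of the form $2H(A\oplus B\mid C)\le H(A\mid C)+H(B\mid C)+\text{const}$, which splits the $n_c$ shared levels evenly between user and helper and is the source of the $\tfrac12 n_c$ coefficient. The remaining $\tfrac12(n_1-n_2)^+$ term comes from the $n_\Delta$ highest bit-levels of $\mathbf{X}_1$ at $\mathbf{Y}_1$ that lie above the helper's contribution (see Remark~\ref{remark1}), which are only partially protected at the eavesdropper.

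The bound $r_{ub3}$ generalizes the s.d.o.f.\ converse of \cite{XieUlukusWiretap-Helper} to arbitrary $\beta_1,\beta_2$. I would introduce a stochastically equivalent copy $\tilde{\mathbf{X}}_2^n$ of the jamming signal independent of $(\mathbf{X}_1^n,\mathbf{X}_2^n)$, use $H(A,B)\le H(A)+H(B)$ on carefully chosen deterministic bit-level functions of $\mathbf{X}_1,\tilde{\mathbf{X}}_2$, and then split the bit axis into the three regimes that naturally appear in the expression: the levels within $n_2$ (giving the first term), the levels between $n_2+n_E$ and $n_1$ (giving the first positive-part correction $(n_1-n_2-n_E)^+$), and the levels above both users at the eavesdropper (giving the second positive-part correction). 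Finally, collecting all three and taking the minimum yields the claim. The main obstacle is the $\tfrac12 n_c$ factor in $r_{ub1}$: without the uniform-input assumption used in the conference version \cite{FW16}, a naive entropy bound gives $n_c$ rather than $\tfrac12 n_c$, so the sub-additivity step must be executed with the precise shift structure of the eavesdropper channel in mind.
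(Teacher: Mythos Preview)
Your plan for $r_{ub2}$ is fine and in fact cleaner than the paper's route; the paper still carries the secrecy term through and cancels it via $H(\mathbf{Y}_2^n\mid\mathbf{X}_1^n)=H(\bar{\mathbf{X}}_2^n)$, arriving at the same $H(\mathbf{X}_1^n)\le nn_1$. For $r_{ub3}$ your approach differs from the paper's: the paper does \emph{not} introduce an independent copy of the helper. Instead it applies the data-processing inequality to replace $\mathbf{Y}_2$ by its top $n_1-n_2$ levels, then matches the $\mathbf{X}_1$-levels appearing in $(\mathbf{Y}_1)_{[:n_1-n_2]}$ against those appearing in $(\mathbf{Y}_2)_{[:n_1-n_2]}$ after conditioning out $\mathbf{X}_2$; the three positive-part pieces fall out of that level bookkeeping directly.

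The genuine gap is in $r_{ub1}$. The inequality you invoke, $2H(A\oplus B\mid C)\le H(A\mid C)+H(B\mid C)+\text{const}$, is false in general (take $A$ Bernoulli$(1/2)$, $B\equiv 0$, $C$ trivial) and does not become true under the shift structure here. The paper obtains the factor $\tfrac12$ by a different mechanism. After writing
\[
n(R-\epsilon)\le H(\mathbf{Y}_1^n)-H((\mathbf{Y}_2^n)_{[:n_2]})+H((\bar{\mathbf{X}}_2^n)_{[:n_2]})-H(\mathbf{X}_2^n)
\]
(this step uses crucially that in the helper model $\mathbf{X}_1^n$ is a \emph{deterministic} function of $W$ while $\mathbf{X}_2^n\perp W$), it \emph{doubles} the bound and then treats the two copies of the \emph{negative} eavesdropper term asymmetrically: one copy is conditioned on $\mathbf{X}_1^n$ and the other on $\mathbf{X}_2^n$, so that
\[
-2H((\mathbf{Y}_2^n)_{[:n_2]})\le -H((\bar{\mathbf{X}}_2^n)_{[:n_2]})-H((\bar{\mathbf{X}}_1^n)_{[:n_2]}),
\]
each step being ``conditioning reduces entropy'', which goes the correct way for a negative term. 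In parallel, only \emph{one} of the two copies of $H((\mathbf{Y}_1^n)_{[:n_c]})$ is expanded via ordinary sub-additivity as $H((\mathbf{X}_1^n)_{[:n_c]})+H((\mathbf{X}_2^n)_{[:\eta]})$ with $\eta=n_c-n_\Delta$. The resulting single-user entropies cancel against the conditioned eavesdropper terms up to a residual $n_\Delta$, and dividing the doubled inequality by $2$ yields $n_p+\tfrac12 n_c+\tfrac12(n_1-n_2)^+$. So the $\tfrac12$ comes from doubling-and-halving combined with \emph{asymmetric conditioning on the negative eavesdropper term}, not from any sub-additivity bound applied to $H(\mathbf{Y}_{1,c}\mid\mathbf{Y}_2)$ itself.
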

The proof was first presented in \cite{FW16} and is in the same fashion as for the truncated deterministic model, which is part of the analysis of the Gaussian model, see Section~\ref{LD-Converse} for further details. 

\subsection{Results for the LD-MAC-WT}
The main idea stays the same: we use a scheme which aligns the jamming parts with the signal parts at the adversary receiver, while minimizing the alignment at the legitimate receiver. But there is the important difference that the channel is symmetrical, i.e. both users can send messages and jam which enables a higher achievable rate. We can also assume w.l.o.g that $n_1\geq n_2$, due to this symmetry. Our analysis in section~\ref{Ach_Scheme_LD_MAC_WT} shows the following:
\begin{theorem}
An achievable secrecy sum-rate $R_{\Sigma}$ of the linear deterministic multiple access wiretap channel with symmetric channel gains at the eavesdropper is
\begin{equation}
R_{\Sigma}=
n_p+\lfloor \tfrac{n_c}{3n_\Delta} \rfloor 2n_\Delta+Q
\end{equation}
 and 
\begin{equation}
Q=\begin{cases}
q & \text{for } n_Q< n_\Delta 
\\
n_\Delta & \text{for } 2n_\Delta > n_Q \geq  n_\Delta\\
n_\Delta + q & \text{for } n_Q \geq 2n_\Delta,
\end{cases}
\end{equation} with $n_Q= n_c \bmod 3n_\Delta$ and $q=n_Q \bmod n_\Delta$.
\end{theorem}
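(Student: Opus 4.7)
My plan is to exhibit an explicit bit-level construction and to verify decodability at $\mathbf{Y}_1$ together with secrecy at $\mathbf{Y}_2$. Using the common/private partition from \eqref{common_part_alignment_scheme}, I would first let user~1 place $n_p$ message bits directly into the private part of $\mathbf{Y}_1$. Those bit-levels lie below the eavesdropper's noise floor by the definition of $n_p$, hence are secret without further processing and contribute the additive $n_p$ to the sum-rate.

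Next, I would slice the top $n_c$ levels of $\mathbf{Y}_1$ into $\lfloor n_c/(3n_\Delta)\rfloor$ consecutive blocks of length $3n_\Delta$, each divided into three sub-blocks of length $n_\Delta$. Inside one such block, user~1 writes the triple (message $m_1$, jamming $u_1$, zero) on the three sub-blocks of its own bit-levels and user~2 writes (jamming $u_2$, message $m_2$, zero) on its own three sub-blocks, where $u_1,u_2$ are fresh independent uniform $n_\Delta$-bit vectors drawn per block and independent of the messages. Because user~2 is shifted down by exactly $n_\Delta$ at $\mathbf{Y}_1$ and the trailing zero of the previous block kills any spillover, the $3n_\Delta$ bits of $\mathbf{Y}_1$ in this block read $(m_1,\,u_1\oplus u_2,\,m_2)$, so both messages come off by direct bit inspection and decoding is block-independent. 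At $\mathbf{Y}_2$ the two users are aligned level-by-level due to the symmetric eavesdropper gain, so the block appears as $(m_1\oplus u_2,\,u_1\oplus m_2,\,0)$, which is independent of $(m_1,m_2)$ because $u_1,u_2$ are uniform and independent. Each full block thus delivers $2n_\Delta$ secret bits, producing the $2n_\Delta\lfloor n_c/(3n_\Delta)\rfloor$ term.

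The $n_Q=n_c\bmod 3n_\Delta$ leftover levels are handled by a case analysis matching the three branches of $Q$. If $n_Q<n_\Delta$, only the top fragment of the next block's first sub-block survives; the corresponding user~1 bit-levels sit strictly above $n_E$ on the transmitter side, so user~1 sends $q=n_Q$ message bits that are automatically hidden at $\mathbf{Y}_2$. If $n_\Delta\le n_Q<2n_\Delta$, user~1 fills the full first sub-block with an $n_\Delta$-bit message; user~2 supplies jamming on the $n_Q-n_\Delta$ positions that are visible at $\mathbf{Y}_2$, while the remaining portion of user~1's message lies above $n_E$ and is automatically private, yielding $Q=n_\Delta$. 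If $n_Q\ge 2n_\Delta$, one additionally carries a partial user~2 message of length $q=n_Q-2n_\Delta$ on the second sub-block, jammed by a matching partial piece of user~1 jamming on its own second sub-block, producing $Q=n_\Delta+q$.

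The hard part is this remainder analysis: the full-chunk scheme is essentially forced by the eavesdropper symmetry and reduces to a line-by-line check, but in the partial chunk one has to track, level by level, (i) which positions of each user lie above $n_E$ and are therefore automatically leak-free, (ii) whether user~2 still has enough active bit-levels, bounded by $n_2=n_1-n_\Delta$, to jam every eavesdropper-visible position carrying a user~1 message, and (iii) that the zero pad at the bottom of the last full block suppresses any cross-block interference at $\mathbf{Y}_1$. Once these three bookkeeping items are verified in each subcase, the piecewise expression for $Q$ drops out, and summing $n_p$, $2n_\Delta\lfloor n_c/(3n_\Delta)\rfloor$ and $Q$ yields the claimed secrecy sum-rate; security in the sense of \eqref{security_constraint2} follows because the entire eavesdropper view is, block by block, a deterministic function of the jamming bits alone.
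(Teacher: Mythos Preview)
Your $3n_\Delta$-block construction is exactly the paper's: user~1 writes (message, jam, zero) and user~2 writes (jam, message, zero) on their own $3n_\Delta$-bit chunks, so that each block of $\mathbf{Y}_{1,c}$ reads $(m_1,\,u_1\oplus u_2,\,m_2)$ while the aligned block of $\mathbf{Y}_2$ reads $(m_1\oplus u_2,\,u_1\oplus m_2,\,0)$, and the Crypto-Lemma secures the $2n_\Delta$ message bits. The private-part treatment is identical too.

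The gap is in your remainder analysis. All three subcases rest on the assertion that part (or all) of user~1's remainder message ``sits strictly above $n_E$'' and is therefore unseen at $\mathbf{Y}_2$. That assertion is equivalent to $3K n_\Delta \geq n_E$ with $K=\lfloor n_c/(3n_\Delta)\rfloor$, and it holds only in the regime $n_2 > n_E$, where $n_c = n_E + n_\Delta$ forces the last $n_\Delta$ levels of the common part below the eavesdropper's view. In the complementary regime $n_E \geq n_2$ one has $n_c = n_1$, and for $n_E \geq n_1$ \emph{every} bit of $\mathbf{X}'_1$ reaches the eavesdropper; your unjammed remainder then leaks outright (take $n_1=10$, $n_2=7$, $n_E=15$, so $n_\Delta=3$, $n_c=10$, $K=1$, $n_Q=1$, and user~1's remainder bit at position~10 is plainly visible at $\mathbf{Y}_2$). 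Your bookkeeping item~(ii), which caps user~2's active levels at $n_2$, misses precisely the resource that repairs this: when $n_E>n_2$, user~2 has bits at positions $n_2+1,\dots,n_E$ that reach $\mathbf{Y}_2$ but not $\mathbf{Y}_1$, so they can jam user~1's remainder at the eavesdropper at zero cost to decodability. The paper's remainder handling also differs in form---it switches user~2's remainder pattern from (jam, message, free) to (free, jam, message)---whereas you implicitly continue the full-block pattern; the net effect of the paper's switch is to slide user~2's jamming onto the positions that align with user~1's remainder message at $\mathbf{Y}_2$, regardless of whether those positions lie above or below $n_E$.
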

Notice that, in the LD-WT-H, the common rate part was approximately $\tfrac{n_c}{2}$, but now we can achieve $\tfrac{2n_c}{3}$. We therefore have an increase in the common (secrecy) rate in comparison to the LD-WT-H. This is due to the aforementioned symmetry. This observation is in accordance with previous s.d.o.f results, as they are $\tfrac{1}{2}$ and $\tfrac{2}{3}$, for the WT-H and the MAC-WT, respectively. The next theorem provides the corresponding upper bound.

\begin{theorem}\label{THM-converse-LD-MAC-WT}
The secrecy sum-rate $R_{\Sigma}$ of the linear deterministic multiple access wiretap channel with symmetric channel gains at the eavesdropper is bounded from above by
\begin{equation}
R_{\Sigma}\leq\begin{cases}
n_p+\tfrac{2}{3}n_c+\frac{1}{3}n_\Delta & \text{for } n_2\geq n_E
\\
\tfrac{2}{3}n_c+\frac{1}{3}n_\Delta  & \text{for } n_E> n_2.\\
\end{cases}
\end{equation}
\end{theorem}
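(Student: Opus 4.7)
The plan is to adapt the Xie--Ulukus sum-s.d.o.f.\ converse for the Gaussian $K$-user MAC-WT \cite{XieUlukusOneHop} to the finite-blocklength linear deterministic setting while tracking channel-gain differences. I would begin by partitioning $\mathbf{Y}_1$ into its private part $\mathbf{Y}_{1,p}$ (the bottom $n_p$ bit levels, below the eavesdropper's reception floor) and its common part $\mathbf{Y}_{1,c}$ (the top $n_c$ bit levels). Anything communicated on $\mathbf{Y}_{1,p}$ is trivially secure and contributes at most $n_p$ to the sum-rate. It then suffices to show for the common part that $R_\Sigma^c \leq \tfrac{2}{3}n_c+\tfrac{1}{3}n_\Delta$; case $2$ ($n_E>n_2$) follows as the degenerate instance with $n_p=0$ and $n_c=n_1$.

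For the common-part bound, Fano's inequality together with the joint secrecy constraint $I(W_1,W_2;\mathbf{Y}_2^n)\le n\epsilon$ gives
\begin{equation*}
n R_\Sigma^c \leq I(W_1,W_2;\mathbf{Y}_{1,c}^n) - I(W_1,W_2;\mathbf{Y}_2^n) + c_1 n\epsilon,
\end{equation*}
while message independence allows enhancement bounds $n R_j \leq I(W_j;\mathbf{Y}_{1,c}^n\mid W_{-j})+n\epsilon$ for $j=1,2$. I would then form the weighted combination
\begin{equation*}
3 n R_\Sigma^c \leq 2\sum_{j=1}^2 I(W_j;\mathbf{Y}_{1,c}^n\mid W_{-j}) + \bigl[I(W_1,W_2;\mathbf{Y}_{1,c}^n)-I(W_1,W_2;\mathbf{Y}_2^n)\bigr] + c_2 n\epsilon,
\end{equation*}
which is the specific weighting (enhancement twice, secrecy once) that reproduces the $(K{+}1)R_\Sigma \leq K$ structure of the $K=2$ s.d.o.f.\ bound. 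The right-hand side is then expanded via the submodularity-type identity $H(\mathbf{Y}_{1,c}^n\mid W_1)+H(\mathbf{Y}_{1,c}^n\mid W_2) \leq H(\mathbf{Y}_{1,c}^n)+H(\mathbf{Y}_{1,c}^n\mid W_1,W_2)$ and evaluated term by term in the linear deterministic model. Here $H(\mathbf{Y}_{1,c}^n)\leq n\cdot n_c$, and the secrecy constraint translates $-I(W_1,W_2;\mathbf{Y}_2^n)$ into a credit against the encoder-randomness budget, which itself is tied to the eavesdropper's bit-level count $n_E$. After bookkeeping, the right-hand side collapses to $n(2n_c+n_\Delta)+o(n)$, yielding the claimed bound once the $+n_p$ contribution is reintroduced.

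The main obstacle is extracting the $\tfrac{2}{3}$ coefficient tightly rather than the trivial $1$; naively bounding $H(\mathbf{Y}_{1,c}^n\mid W_{-j})$ by $n_j$ and $H(\mathbf{Y}_{1,c}^n)$ by $n_1$ only recovers the ordinary MAC sum-rate $R_\Sigma\leq n_1$ and fails to exploit secrecy. The tighter argument must couple the stochastic-encoder randomness---constrained through $H(\mathbf{Y}_2^n\mid W_1,W_2)\ge H(\mathbf{Y}_2^n)-n\epsilon$ and $H(\mathbf{Y}_2^n)\leq n\cdot n_E$---into the enhancement terms so that the jamming cost enters with the correct $\tfrac{1}{3}$ weight. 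Since the paper's contributions emphasize not assuming uniform input distributions, this must be carried out by genie-aided arguments that carefully separate the roles of $W_j$, the encoder randomness $F_j$, and the bit-level structure of $\mathbf{X}_j$ at the two receivers; the $+\tfrac{1}{3}n_\Delta$ slack in the bound is naturally produced by the $n_\Delta$ topmost bit levels of $\mathbf{Y}_{1,c}$ carrying only $\mathbf{X}_1$, where no jamming is needed and the submodularity step loses a factor of $n_\Delta/3$ compared to the symmetric case.
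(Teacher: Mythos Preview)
Your proposal has a genuine gap: the core mechanism you rely on does not go through. The inequality you call a ``submodularity-type identity,''
\[
H(\mathbf{Y}_{1,c}^n\mid W_1)+H(\mathbf{Y}_{1,c}^n\mid W_2)\;\leq\;H(\mathbf{Y}_{1,c}^n)+H(\mathbf{Y}_{1,c}^n\mid W_1,W_2),
\]
is reversed. For independent $W_1,W_2$, submodularity of entropy gives the opposite direction, $H(\mathbf{Y}_{1,c}^n\mid W_1)+H(\mathbf{Y}_{1,c}^n\mid W_2)\geq H(\mathbf{Y}_{1,c}^n)+H(\mathbf{Y}_{1,c}^n\mid W_1,W_2)$, which is useless for upper-bounding your weighted combination. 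Without this step, the expression $2\sum_j I(W_j;\mathbf{Y}_{1,c}^n\mid W_{-j})+[I(W_1,W_2;\mathbf{Y}_{1,c}^n)-I(W_1,W_2;\mathbf{Y}_2^n)]$ does not collapse to $2n_c+n_\Delta$; your ``bookkeeping'' paragraph is precisely where the proof would have to happen, and you have not supplied it. You yourself flag in the last paragraph that the naive bounds only recover $R_\Sigma\leq n_1$, and your sketch never actually shows how the secrecy credit $-I(W_1,W_2;\mathbf{Y}_2^n)$ cancels against the enhancement terms.

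The paper's route (which is the LD translation of the Xie--Ulukus argument you cite) is structurally different from your weighted combination. It first bounds $n(R_\Sigma-\epsilon)\leq H(\mathbf{Y}_{1,c}^n\mid \mathbf{Y}_2^n)+H(\mathbf{Y}_{1,p}^n\mid \mathbf{Y}_2^n,\mathbf{Y}_{1,c}^n)$ directly via Fano plus secrecy. The crucial step you are missing is a pair of \emph{genie bounds} on the conditional entropy $H(\mathbf{Y}_{1,c}^n\mid \mathbf{Y}_2^n)$: one shows $H(\mathbf{Y}_{1,c}^n\mid \mathbf{Y}_2^n)\leq H(\bar{\mathbf{X}}_2^n)+nn_\Delta$ and the other $H(\mathbf{Y}_{1,c}^n\mid \mathbf{Y}_2^n)\leq H((\mathbf{X}_1^n)_{[:n_c]})$, each obtained by adding the appropriate single-user signal to the conditioning and exploiting the equal-gain structure at the eavesdropper. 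These single-user entropies are then controlled by the individual Fano inequalities $nR_j\leq I(X_j^n;\mathbf{Y}_1^n)$, which rearrange to $H((\mathbf{X}_{-j}^n)_{[:n_c]})\leq H(\mathbf{Y}_{1,c}^n)+I(X_j^n;\mathbf{Y}_{1,p}^n\mid \mathbf{Y}_{1,c}^n)-nR_j$. Substituting each genie bound into one copy of the sum-rate inequality and adding the two resulting inequalities produces $3nR_\Sigma\leq 2H(\mathbf{Y}_{1,c}^n)+3nn_p+nn_\Delta$, where the private terms combine via the identity $I(X_1^n;\mathbf{Y}_{1,p}^n\mid \mathbf{Y}_{1,c}^n)+I(X_2^n;\mathbf{Y}_{1,p}^n\mid \mathbf{Y}_{1,c}^n)=H(\mathbf{Y}_{1,p}^n\mid \mathbf{Y}_{1,c}^n)$. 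Your proposal never introduces the single-user entropies $H(\mathbf{X}_j^n)$ as intermediate objects, and without them there is no bridge from the secrecy constraint to the $\tfrac{2}{3}$ coefficient.
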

The proof is in the same fashion as the one for the truncated deterministic model, see Section~\ref{LD-Converse} for details.

\subsection{Results for the G-WT-H}
To get achievability results for the Gaussian wiretap channel with a helper, we stick to the previously developed scheme for the deterministic model. We will transfer the alignment and jamming structure of the previous model to its Gaussian equivalent with layered lattice codes, similar to \cite{Bresler2010}. This will lead to an achievable rate which is directly based on the deterministic rate. Moreover, we will make use of results in \cite{Mukherjee2017} to show that the mutual information of the Gaussian case can be upper bounded by an appropriate deterministic model. As a result, the bound for the deterministic model is also a bound for the Gaussian model, with a constant bit-gap.

\begin{theorem}
An achievable secrecy rate $R$ of the Gaussian Wiretap channel with a helper is
\begin{equation*}
R=r^p+r^c+r^R
\end{equation*}
where $r^c\!:=\!l_{\text{u}}(\tfrac{1}{2}\log \text{SNR}_1^{(1-\beta_1)}\!-\!\tfrac{1}{2})$, with\\ 
\begin{equation*}
l_{\text{u}}:= \left\lfloor \frac{\min\{1+\beta_2-\beta_1,1\}}{2(1-\beta_1)}\right\rfloor,
\end{equation*}\\
 $r^p:=\tfrac{1}{2}\log(\max\{1,\text{SNR}_1^{\beta_1-\beta_2}\})$, and\\
\begin{equation*}
r^R=\begin{cases}
r^{R_1} & \text{for } r^{R_1} < r^{R_2}, \text{SNR}_1\geq \text{SNR}_2
\\
r^{R_2} & \text{for } r^{R_1} \geq  r^{R_2}, \text{SNR}_1\geq \text{SNR}_2\\
0 & \text{for } r^{R_1} < r^{R_2}, \text{SNR}_2\geq \text{SNR}_1\\
r^{R_3} & \text{for } r^{R_1} \geq  r^{R_2}, \text{SNR}_2\geq \text{SNR}_1
\end{cases}
\end{equation*}
with
\begin{IEEEeqnarray*}{rCl}
r^{R_1}&:=& \tfrac{1}{2}\log \text{SNR}_1^{1-2l_\text{u}(1-\beta_1)} - \tfrac{1}{2}\log \text{SNR}_1^{\min\{ \beta_1-\beta_2,0\}}-\tfrac{1}{2},\\
r^{R_2}&:=& \tfrac{1}{2}\log \text{SNR}_1^{(1-\beta_1)}-\tfrac{1}{2},\quad r^{R_3}:= r^{R_1}-r^{R_2}.
\end{IEEEeqnarray*}
\end{theorem}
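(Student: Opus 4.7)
The plan is to transfer the bit-level alignment scheme underlying Theorem~1 to the Gaussian setting via layered lattice codes, in the style of Bresler--Parekh--Tse. The natural dictionary between the LDM and Gaussian quantities is $n_1 \leftrightarrow \tfrac{1}{2}\log\text{SNR}_1$, $n_E \leftrightarrow \tfrac{1}{2}\log\text{SNR}_1^{\beta_2}$, and crucially $n_\Delta \leftrightarrow \tfrac{1}{2}\log\text{SNR}_1^{1-\beta_1}$, so that each LDM bit-level corresponds to a PAM layer of scale $\Delta := \text{SNR}_1^{-(1-\beta_1)/2}$. Under this identification, the three summands $r^p$, $r^c$, $r^R$ will match the LDM summands $n_p$, $\lfloor n_c/(2n_\Delta)\rfloor n_\Delta$, $Q$ respectively, up to a constant gap of $\tfrac{1}{2}$ bit per layer that accounts for Gaussian noise.

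First I would split the user's power into a bottom \emph{private} layer carrying a PAM constellation of roughly $\tfrac{1}{2}\log\text{SNR}_1^{\beta_1-\beta_2}$ bits, which sits below the eavesdropper's effective noise floor and explains $r^p$, and a stack of $l_\text{u}$ \emph{common} PAM layers each carrying $\tfrac{1}{2}\log\text{SNR}_1^{1-\beta_1}-\tfrac{1}{2}$ bits, explaining $r^c = l_\text{u}(\tfrac{1}{2}\log\text{SNR}_1^{1-\beta_1}-\tfrac{1}{2})$. The helper then transmits an independent uniform PAM codeword placed at exactly the complementary lattice positions, so that at $Y_2$ every message layer coincides (is aligned) with a jamming layer of the same cardinality, while at $Y_1$ the relative shift of $n_\Delta$ bit-levels between the two transmitters (from $\beta_1<1$) keeps the user's layers separable from the helper's layers. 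The remainder term $r^R$ would be obtained by an additional lattice layer of width less than $n_\Delta$ bits placed in the leftover space $n_Q = n_c \bmod 2n_\Delta$; the four cases of $r^R$ in the statement track exactly the four cases of $Q$ in Theorem~1, differing by whether the residual fits inside one $n_\Delta$ block and whether $n_1\geq n_2$ or $n_1<n_2$.

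For reliability at $Y_1$, I would bound the minimum distance of the combined lattice received at the legitimate receiver. Because the user's and helper's layers occupy interleaved positions shifted by $n_\Delta$, the aggregate codebook is a Cartesian product of PAM constellations of spacing $\Delta$, and a standard Gaussian tail bound shows that a minimum-distance decoder achieves error probability $\to 0$ whenever each PAM layer carries at most $\tfrac{1}{2}\log\text{SNR}_1^{1-\beta_1}-\tfrac{1}{2}$ bits. Summing over the $l_\text{u}$ common layers, the private layer, and the remainder layer yields the claimed rate formula.

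The hard part will be the secrecy constraint, where I would lean on the bound from \cite{Mukherjee2017} (already advertised in the preceding paragraph of the paper) to upper-bound the Gaussian mutual information $I(W;Y_2^n)$ by a mutual information in the associated truncated deterministic model, and then close the argument by invoking the LDM secrecy claim inherent in Theorem~1. The delicate point is that in the LDM alignment is an exact XOR identity that annihilates $W$ at the eavesdropper, whereas in the Gaussian case each aligned layer is a sum of two real lattice points plus noise; one has to show that this sum is approximately uniform on the enlarged lattice and hence essentially independent of $W$. This is precisely where the $-\tfrac{1}{2}$ bit losses enter and why the alphabet sizes of message and jamming PAM constellations must be chosen to agree layer by layer. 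Once this deterministic-to-Gaussian leakage transfer is in place, combining the private, common, and remainder rate contributions gives the stated $R=r^p+r^c+r^R$.
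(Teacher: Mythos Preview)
Your overall architecture---private layer, $l_\text{u}$ common layers, remainder layer, dictionary $n_\Delta \leftrightarrow \tfrac{1}{2}\log\text{SNR}_1^{1-\beta_1}$---matches the paper's, but the secrecy argument you sketch is the wrong tool and would not close.

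You propose to bound the Gaussian leakage $I(W;Y_2^n)$ by the deterministic leakage via the constant-gap result of \cite{Mukherjee2017}, and then invoke the LDM secrecy of Theorem~1. But that bound reads $I(W;Y_{2,\text{G}}^n)\leq I(W;Y_{2,\text{D}}^n)+nc$ with a \emph{fixed} constant $c$; even if the LDM leakage is identically zero you are left with $\tfrac{1}{n}I(W;Y_{2,\text{G}}^n)\leq c$, which does not satisfy the secrecy constraint $\tfrac{1}{n}I(W;Y_2^n)\to 0$. In the paper the Mukherjee bound is used only for the \emph{converse} (Theorem~6), never for achievability. The paper obtains secrecy directly in the Gaussian model: it uses $n$-dimensional nested lattice codes $\Lambda_{l}\subset\Lambda_{l-1}$ with random dithers, and on each aligned level the helper transmits $\mathbf{x}_{2,\text{jam}}(l)\sim\text{Unif}(\mathcal{V}(\Lambda_l))$. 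Since at $Y_2$ the message and jamming arrive with equal scale, the Crypto-lemma applied to the compact group $(\mathbb{R}^n/\Lambda_l,+)$ makes $[\mathbf{x}_1(l)+\mathbf{x}_{2,\text{jam}}(l)]\bmod\Lambda_l$ \emph{exactly} independent of $\mathbf{x}_1(l)$. No approximate-uniformity argument and no deterministic detour are needed.

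Relatedly, your attribution of the $-\tfrac{1}{2}$ per layer to secrecy is off. In the paper those losses come purely from the reliability side: level-wise successive decoding treats all lower layers (of both user and helper) as noise, so each level is a point-to-point nested-lattice channel with rate $r_l\leq \tfrac{1}{2}\log\bigl(1+\theta_l/(1+\text{SNR}_1^{1-l(1-\beta_1)})\bigr)$, and the simplification $\log\bigl(\tfrac{a+1}{1+b}\bigr)\geq\log\tfrac{a}{2b}$ yields $r_l\geq \tfrac{1}{2}\log\text{SNR}_1^{1-\beta_1}-\tfrac{1}{2}$. Your minimum-distance decoder on a Cartesian-product PAM constellation is a plausible alternative for reliability, but note that without the successive-decoding viewpoint you also need to argue why the helper's layers, which are interleaved with the user's at $Y_1$ and scaled by arbitrary real $h_{21}/h_{11}$, do not destroy the product structure; the paper's successive decoding sidesteps this by absorbing them into the effective noise.
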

We note that this results directly corresponds to the linear deterministic result in Theorem~\ref{theorem_det}. For example $l_u$ corresponds to the signal scale factor $\lfloor\tfrac{n_c}{2n_\Delta}\rfloor$ and $\log\text{SNR}^{(1-\beta_1)}$ to $n_\Delta$, which can be seen by using \eqref{beta_1_corr}. We are therefore within a constant bit-gap of the rates of the deterministic model (Theorem~\ref{theorem_det}), by comparing via $n=\lfloor \tfrac{1}{2}\log \text{SNR} \rfloor$.
For the converse, the goal is to bound the Gaussian mutual information terms by the ones of the deterministic model. Due to the G-WT-H consisting of MAC channels with security constraint, one could try to use the constant-gap bound of \cite{Bresler2008}. Unfortunately, the result of \cite[Thm.1]{Bresler2008} for the complex Gaussian IC, which shows that the capacity is within a constant-gap of the deterministic IC capacity, depends on the uniformity of the optimal input distribution in the model to show that $I(W;Y_{2,\text{G}}^n)\leq I(W;Y_{2,\text{LDM}}^n)+cn$, where G stands for Gaussian model. However, it was shown in \cite{Mukherjee2017} that an integer-input integer-output model of the MAC-WT and WT-H, is within a constant-gap of the G-MAC-WT and G-WT-H. We therefore utilize a variation of that model in section~\ref{Developing a Converse from LD-Bounds} to transfer the results from Theorem~\ref{THM-converse-LD-WT-Helper} to the Gaussian case, which results in the following theorem:
\begin{theorem}\label{THM-converse-G-WT-Helper}
The secrecy rate $R$ of the Gaussian wiretap channel with one helper and symmetric channel gains at the wiretapper is bounded from above by
\begin{equation*}
R \leq \min\{ r_{ub1}, r_{ub2}, r_{ub3}\}+c
\end{equation*}
with \begin{IEEEeqnarray*}{rCl}
r_{ub1}&=& n_p+\tfrac{1}{2}n_c+\tfrac{1}{2}(n_1-n_2)^+\\
r_{ub2}&=&n_{1}\\
r_{ub3}&=&n_{2}+(n_{1}-n_{2}-n_{E})^+\\
&& +\:[n_E-n_{2}-(n_E-n_{1}+n_{2})^+]^+,
\end{IEEEeqnarray*}
where c is a constant independent of the signal-to-noise ratio.
\end{theorem}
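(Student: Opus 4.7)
The three upper bounds of Theorem~\ref{THM-converse-LD-WT-Helper} are already established for the linear deterministic model, and the task is to lift them to the Gaussian channel while paying only a constant bit gap. The strategy is threefold: (i) replace the Gaussian wiretap channel by an auxiliary integer-input integer-output ``truncated deterministic'' channel of the type used in \cite{Mukherjee2017}, for which the secrecy mutual information differs from the Gaussian one by at most an additive constant per channel use; (ii) rerun each of the three converse arguments underlying Theorem~\ref{THM-converse-LD-WT-Helper} inside this auxiliary channel \emph{without} ever needing to assume a uniform input distribution, which is exactly the point at which the classical Bresler--Tse constant-gap translation of \cite{Bresler2008} breaks down under a secrecy constraint; and (iii) chain the two steps and absorb every additive loss into the single constant $c$.

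For step (i) I would introduce the truncated deterministic approximation as a deterministic function of the Gaussian channel inputs (quantize the transmit signals to the bit-grid and discard contributions below the noise floor). The lemma I need from \cite{Mukherjee2017}, applied to both receivers, is a two-sided sandwich of the form $I(W;Y_{1,\text{G}}^n)\leq I(W;Y_{1,\text{LDM}}^n)+c_1 n$ and $I(W;Y_{2,\text{G}}^n)\geq I(W;Y_{2,\text{LDM}}^n)-c_2 n$, holding for \emph{every} input distribution since the proof rests on dithering/quantization rather than on optimizing the distribution. Combined with the Gaussian secrecy constraint $I(W;Y_{2,\text{G}}^n)\leq \epsilon n$ and Fano's inequality on $Y_{1,\text{G}}^n$, this gives an almost-secure code on the truncated deterministic channel whose rate is within an additive constant of the Gaussian secrecy rate.

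For step (ii) I would replay each of the three LDM converse arguments in the truncated model. The bound $r_{ub2}=n_1$ follows from the plain single-link capacity to receiver~1 via Fano's inequality, i.e.\ $nR\leq I(W;Y_1^n)+n\epsilon\leq n\cdot n_1+O(1)$. The bound $r_{ub1}=n_p+\tfrac12 n_c+\tfrac12(n_1-n_2)^+$ is the ``cooperative'' bound in which one reveals the helper's jamming to the legitimate receiver and then exchanges the roles of $Y_1$ and $Y_2$ using the secrecy inequality $H(W|Y_2^n)\geq H(W)-\epsilon n$; I would transcribe this argument entropy inequality by entropy inequality onto the truncated deterministic channel. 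The bound $r_{ub3}$ requires in addition a layered partition of the eavesdropper's observation into the pieces seen above and below the other transmitter's signal, and then the Csisz\'ar--K\"orner type manipulation $I(W;Y_1^n)-I(W;Y_2^n)\leq$ (deterministic-side quantity); this too carries over mechanically once the auxiliary channel is in place.

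The main obstacle I expect is bookkeeping the additive constants: each invocation of the quantization bound, each use of the secrecy constraint to replace $H(W|Y_2^n)$, and each mixed chain-rule step that interpolates between Gaussian and deterministic signals introduces an $O(n)$ slack that must be SNR-independent, and these must all collapse into the single constant $c$. A secondary difficulty is the asymmetric role of user and helper in $r_{ub3}$: the cases $n_1\geq n_2$ and $n_1<n_2$ must be handled separately, verifying in each that the decomposition of $Y_2$ used in the LDM converse remains a valid deterministic function of the truncated inputs so the sandwich of step (i) still applies. Once these are checked, the three bounds for the Gaussian model follow by taking the minimum, which is exactly the claim of Theorem~\ref{THM-converse-G-WT-Helper}.
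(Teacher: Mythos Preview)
Your overall framework---pass to the truncated deterministic model via the constant-gap sandwich of \cite{Mukherjee2017} and then rerun the LDM converse arguments there---is exactly what the paper does. However, your sketches of the individual bounds mischaracterize the actual arguments, and for $r_{ub1}$ in particular your proposed genie (``reveal the helper's jamming to the legitimate receiver'') does not obviously yield the stated bound. The paper's derivation of $r_{ub1}$ never gives away $X_2^n$; instead it starts from $n(R-\epsilon)\leq H(Y_{1,\text{D}}^n)-H((Y_{2,\text{D}}^n)_{[:n_2]})+H((\lfloor h_EX_{2,\text{D}}^n\rfloor)_{[:n_2]})-H(\lfloor h_2X_{2,\text{D}}^n\rfloor)$, using crucially that in the helper model $X_1$ is a \emph{deterministic} function of $W$ while $X_2$ is independent of $W$, and then applies a doubling trick: write $2(H(Y_{1,\text{D}}^n)-H((Y_{2,\text{D}}^n)_{[:n_2]}))$, condition one copy of $(Y_{2,\text{D}}^n)_{[:n_2]}$ on $\tilde{\mathbf X}_1$ and the other on $\tilde{\mathbf X}_2$, split $Y_{1,\text{D}}^n$ into common and private parts, and balance the resulting single-user entropy differences to extract the factor $\tfrac12$. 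For $r_{ub3}$ the layered partition is applied to $Y_{1,\text{D}}$ (split at level $n_1-n_2$), not to the eavesdropper's observation as you wrote; the eavesdropper's side is handled by data processing, restricting $Y_{2,\text{D}}$ to its top $n_1-n_2$ levels before expanding.

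A second point you gloss over is that the transcription from LDM to TDM is not purely mechanical: integer addition with floor functions does not decompose as cleanly as bit-wise XOR. The paper isolates two technical lemmata (Lemmas~\ref{Lemma_entropy_floorfunc_equal_bits} and~\ref{Lemma_floorsplitting}) establishing $H(\lfloor hX_D\rfloor)=H(\tilde X_1,\dots,\tilde X_n)$ and showing that one may split $\lfloor hX_D\rfloor$ across a bit boundary inside an entropy without loss. These are what make the bit-level restriction operator $(\cdot)_{[a:b]}$ behave in the TDM as it does in the LDM, and they carry the real weight of the ``transcription'' step you propose.
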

We want to point out, that the bound $r_{ub3}$ has two possible forms for $n_E>n_2$ (vanishing private part), depending on the relation between $n_1$ and $n_2$. For $n_1\leq 2n_2$, $r_{ub3}=n_2$, while for $n_1>2n_2$ the bound becomes $r_{ub3}=n_1-n_2$, which can be seen in Fig.~\ref{beta_1}.

\subsection{Results for the G-MAC-WT}
For the Gaussian multiple access wiretap channel, we use the same techniques as in the G-WT-H case. This means we utilize lattice codes to transfer the achievable scheme from the linear deterministic model to the Gaussian model, which results in the following theorem:
\begin{theorem}
An achievable secrecy sum-rate $R_{\Sigma}$ of the Gaussian multiple-access wiretap channel is
\begin{equation}
R_{\Sigma}=r^p+r^c+r^R  
\end{equation}
where $r^c\!:=\!l_{\text{u}}(\tfrac{1}{2}\log \text{SNR}_1^{(1-\beta_1)}\!-\!\tfrac{1}{2})$, with\\
\begin{equation}
l_{\text{u}}:= 2\left \lfloor \frac{\min\{1+\beta_2-\beta_1,1\}}{3(1-\beta_1)}\right\rfloor,
\end{equation} 
$r^p:=\tfrac{1}{2}\log(\max\{1,\text{SNR}_1^{\beta_1-\beta_2}\})$, and
\begin{equation}
r^R=\begin{cases}
r^{R_1} & \text{for } r^{R_1}< r^{R_2}
\\
r^{R_2} & \text{for } 2r^{R_2} > r^{R_1}\geq  r^{R_2}\\
r^{R_1}+r^{R_2} & \text{for } r^{R_1} \geq 2r^{R_2}.
\end{cases}
\end{equation}
with \begin{IEEEeqnarray*}{rCl}
r^{R_1}&:=&\tfrac{1}{2}\log \text{SNR}_1^{1-\tfrac{3}{2}l_\text{u}(1-\beta_1)} - \tfrac{1}{2}\log \text{SNR}_1^{\min\{ \beta_1-\beta_2,0\}}-\tfrac{1}{2}\\
r^{R_2}&:=&\tfrac{1}{2}\log \text{SNR}_1^{(1-\beta_1)}-\tfrac{1}{2}.
\end{IEEEeqnarray*}
\end{theorem}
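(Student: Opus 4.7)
The plan is to lift the LD-MAC-WT achievable scheme of the preceding theorem to the Gaussian model via nested lattice codes, in direct analogy with the G-WT-H achievability and in the spirit of the layered-lattice translation of Bresler--Parekh--Tse. First I would fix the LDM-to-SNR dictionary: identify $n \leftrightarrow \tfrac{1}{2}\log\text{SNR}$, so that $n_\Delta = n_1-n_2 \leftrightarrow \tfrac{1}{2}\log \text{SNR}_1^{1-\beta_1}$ and $n_c = \min\{n_E+n_\Delta,\max\{n_1,n_2\}\} \leftrightarrow \min\{1+\beta_2-\beta_1,1\}\cdot\tfrac{1}{2}\log\text{SNR}_1$. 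Under this dictionary the LDM quantity $\lfloor n_c/(3n_\Delta)\rfloor$ maps to $l_{\text{u}}/2$, each ``layer'' carries rate $\tfrac{1}{2}\log\text{SNR}_1^{1-\beta_1}-\tfrac{1}{2}$ (the half bit absorbing the standard lattice shaping/decoding penalty), and the LDM quantity $\lfloor n_c/(3n_\Delta)\rfloor 2n_\Delta$ becomes the claimed $r^c$. The term $n_p$ maps to $r^p$, and the three-case remainder $Q$ of Theorem~3 maps to the three-case $r^R$, where $r^{R_2}$ represents one full $n_\Delta$-layer and $r^{R_1}$ represents the remaining sub-layer whose scale is $1-\tfrac{3}{2}l_{\text{u}}(1-\beta_1)$ in log-SNR units.

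Second I would construct the lattice-coded scheme layer by layer. For each of the $l_{\text{u}}$ common sub-blocks, each user transmits a nested lattice codeword of power roughly $\text{SNR}_1^{1-\beta_1}$, with the two users' codewords staggered in amplitude by exactly $n_\Delta$ bit-levels so that at $Y_1$ they occupy disjoint SNR windows and can be peeled off by successive decoding, whereas at $Y_2$ (where both users are heard at equal strength $\text{SNR}_E$) each message layer of one user coincides in amplitude with a jamming (uniform random lattice) layer of the other user. The three-layer periodic pattern inherited from the LDM construction of Theorem~3 determines which sub-layer is a message and which is jamming, and the additional remainder layers (corresponding to $r^{R_1},r^{R_2},r^{R_3}$) are placed in the remaining unused scale $1-\tfrac{3}{2}l_{\text{u}}(1-\beta_1)$, again paired with a jamming layer of the other user at $Y_2$. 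The private part $r^p$ sits below the noise floor at $Y_2$ and needs no jamming, exactly as in the $n_p$ bits of the LDM.

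Third, I would establish reliability and secrecy. Reliability follows from a standard recursive application of the single-user Erez--Zamir nested lattice result: at $Y_1$, each of the layers of each user can be successively decoded treating all lower layers as additional Gaussian noise, with a uniform half-bit loss per layer; this yields the rates $r^p$, $r^c$ and $r^R$ as stated. For secrecy, the key observation is that at $Y_2$ every message lattice point is observed only after addition with an independent uniform jamming lattice point of equal scale, so that conditioned on the rest of the received signal, the coset containing any message layer is essentially uniformly distributed. A mutual-information bound of the Bresler--Parekh--Tse / Xie--Ulukus type then gives $I(W_1,W_2;Y_2^n)\le \epsilon n$, verifying~\eqref{security_constraint2}.

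The main obstacle is the joint secrecy/reliability argument, not the rate accounting. Unlike the WT-H, both users here simultaneously produce message and jamming layers, and the same lattice must serve as a decodable message lattice for one receiver and as a nearly-uniform jammer at the other. One must therefore choose the nested lattice chain carefully (simultaneously good for AWGN and for shaping) and coordinate the alignment of the $r^R$ remainder layers with the three possible cases of the $Q$ term. Once this is pinned down, the rate expression $r^p+r^c+r^R$ matches the LDM rate of Theorem~3 up to a constant bit-gap, which has already been absorbed into the $-\tfrac{1}{2}$ terms inside $r^c$, $r^{R_1}$ and $r^{R_2}$.
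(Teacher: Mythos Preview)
Your proposal is correct and follows essentially the same route as the paper: both lift the LD-MAC-WT scheme of Theorem~3 to the Gaussian model via the layered nested-lattice construction of Section~\ref{WT_Helper_Achievable_Scheme}, using the same $n\leftrightarrow\tfrac{1}{2}\log\text{SNR}$ dictionary, the same private/common/remainder decomposition, successive decoding at $Y_1$ for reliability (with the $-\tfrac{1}{2}$ per-layer penalty), and the Crypto-lemma at $Y_2$ for secrecy. One small slip: you mention $r^{R_3}$, which belongs to Theorem~5 (the G-WT-H); in the present theorem only $r^{R_1}$ and $r^{R_2}$ appear, and the three cases of $r^R$ mirror exactly the three cases of $Q$ in Theorem~3.
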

As for the G-WTH-H case, we can see that there is a direct correspondence between the achievable rate for the linear deterministic model and the results for the Gaussian model. In particular, the rate is within a constant bit-gap of the rates of the deterministic model, by comparing via $n=\lfloor \tfrac{1}{2}\log \text{SNR}\rfloor$.
Moreover, we transferred the upper bound of the LD-MAC-WT to the G-MAC-WT, which results in the following theorem.

\begin{theorem}\label{THM-converse-LD-MAC-WT}
The secrecy sum-rate $R_{\Sigma}$ of the Gaussian multiple access wiretap channel with symmetric channel gains at the eavesdropper is bounded from above by
\begin{equation}
R_{\Sigma}\leq\begin{cases}
n_p+\tfrac{2}{3}n_c+\frac{1}{3}n_\Delta+c & \text{for } n_2\geq n_E
\\
\tfrac{2}{3}n_c+\frac{1}{3}n_\Delta+c  & \text{for } n_E> n_2,
\end{cases}
\end{equation} where c is a constant independent of the signal-to-noise ratio.
\end{theorem}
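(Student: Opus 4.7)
The plan is to parallel the argument used for Theorem~\ref{THM-converse-G-WT-Helper}: first establish the bound for the LD-MAC-WT model (Theorem~\ref{THM-converse-LD-MAC-WT} in its deterministic form), then lift that bound to the Gaussian channel through the integer-input/integer-output (IIIO) approximation of \cite{Mukherjee2017}, absorbing the approximation penalty into the constant $c$. Concretely, I would start from Fano's inequality to obtain $nR_\Sigma \le I(W_1,W_2; Y_1^n) + n\epsilon_n$ and then combine with the joint secrecy constraint $I(W_1,W_2; Y_2^n)\le n\epsilon$ to reduce the problem to bounding a difference of mutual informations, $nR_\Sigma \le I(W_1,W_2; Y_1^n) - I(W_1,W_2; Y_2^n) + n\epsilon'$, which is the quantity that the deterministic converse analysis controls.

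The next step is to implement the deterministic converse of the LD-MAC-WT. Following the template in Section~\ref{LD-Converse}, I would introduce noisy observations of the legitimate receiver's common part, feed the eavesdropper's output and appropriate side information as genie signals, and use the symmetry between the two users (roles of sender and jammer) that underlies the $\tfrac{2}{3}n_c + \tfrac{1}{3}n_\Delta$ term. Concretely, one bounds $H(Y_1^n)$ in terms of $n_p + n_c$ and then uses the secrecy constraint together with the fact that the eavesdropper sees the superposition $\bar{\mathbf{X}}_1\oplus\bar{\mathbf{X}}_2$ of equal-strength signals. The case split on $n_2\ge n_E$ versus $n_E>n_2$ reflects whether the private part $n_p$ survives at the legitimate receiver.

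To transfer this to the Gaussian model I would invoke the IIIO approximation of \cite{Mukherjee2017}: the sum-capacity (with secrecy) of the G-MAC-WT is within a constant gap of the sum-capacity of an integer-input/integer-output channel whose parameters correspond (via $n\approx\lfloor\tfrac{1}{2}\log\text{SNR}\rfloor$) to the bit-levels $n_1,n_2,n_E$. Unlike the naive LDM/Gaussian comparison used in \cite{Bresler2008}, this approximation does not require optimality of a uniform input distribution, so the inequality $I(W_1,W_2;Y_{2,\text{G}}^n)\le I(W_1,W_2;Y_{2,\text{IIIO}}^n)+c_1 n$ holds for arbitrary input distributions, and similarly for $Y_1^n$ in the reverse direction. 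Applying the deterministic-style bound to the IIIO model and then translating back therefore yields the statement with the constant $c$ absorbing the IIIO gap and the Fano residuals.

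The main obstacle will be the second step: carrying out the deterministic converse for the IIIO model in a way that cleanly produces the $\tfrac{2}{3}n_c + \tfrac{1}{3}n_\Delta$ coefficient, rather than the $\tfrac{1}{2}n_c$ coefficient of the helper case. Unlike the helper case, here both $W_1$ and $W_2$ contribute messages and both can be used as jamming, so the standard trick of subtracting the jammer's signal with a genie has to be applied symmetrically in a way that produces a factor-of-two saving on the common-part entropy while still respecting the secrecy constraint. I expect this to require splitting the common part into three equal sub-bands of width $n_\Delta$ and carefully bookkeeping which sub-band is genied to which receiver, paralleling the achievable scheme's three-way partition $\lfloor n_c/(3n_\Delta)\rfloor$. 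Once this combinatorial bookkeeping is set up correctly for the IIIO model, the remaining bounds are routine entropy manipulations whose approximation errors are constants absorbed into $c$.
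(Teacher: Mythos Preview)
Your overall framework is right and matches the paper: reduce the Gaussian model to the truncated/IIIO deterministic model via \cite{Mukherjee2017}, then prove the converse there and absorb the approximation into $c$. Where your proposal diverges is precisely at the step you flag as ``the main obstacle,'' namely how the $\tfrac{2}{3}n_c+\tfrac{1}{3}n_\Delta$ coefficient arises.

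The paper does \emph{not} obtain the $2/3$ factor by partitioning the common part into three sub-bands of width $n_\Delta$. Instead it extends the ``role of decodability'' argument of \cite{XieUlukusOneHop} to unequal channel strengths. Concretely: starting from $n(R_\Sigma-\epsilon)\le H(Y_{1,\text{D},c}^n\mid Y_{2,\text{D}}^n)+H(Y_{1,\text{D},p}^n\mid\cdot)+nc$, one bounds the conditional common-part entropy in \emph{two} different ways, once by $H(\lfloor h_E X_{2,\text{D}}\rfloor)+nn_\Delta$ and once by $H((\lfloor h_1 X_{1,\text{D}}\rfloor)_{[:n_c]})$ (both via genie arguments that reveal one user's signal). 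Separately, applying Fano to each message individually yields $H((\lfloor h_i X_{i,\text{D}}\rfloor)_{[:n_c]})\le H(Y_{1,\text{D},c}^n)+I(X_j^n;Y_{1,\text{D},p}^n\mid Y_{1,\text{D},c}^n)-n(R_j-\epsilon)$ for $i\neq j$. Substituting each single-signal bound into one copy of the sum-rate inequality produces one bound on $2R_1+R_2$ and one on $R_1+2R_2$; summing the two gives $3n(R_1+R_2)\le 2H(Y_{1,\text{D},c}^n)+3\,(\text{private terms})+nn_\Delta$, whence the $\tfrac{2}{3}n_c+\tfrac{1}{3}n_\Delta$ after dividing by $3n$. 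The symmetry you invoke enters only in that both users' signals can play the role of the ``remaining'' term in $H(Y_{1,c}\mid Y_2)$, not through any geometric three-way split of bit-levels.

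Your proposed three-sub-band bookkeeping mirrors the \emph{achievable} scheme, but converses generally do not inherit the partition structure of the scheme, and you have not indicated how a sub-band decomposition would yield a matching outer bound rather than merely re-deriving achievability. Without the decodability substitution step (individual-rate Fano feeding back into the sum-rate bound), it is not clear your plan would close; that substitution is the missing key idea.
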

\begin{figure}
\centering
\includegraphics[scale=0.63]{./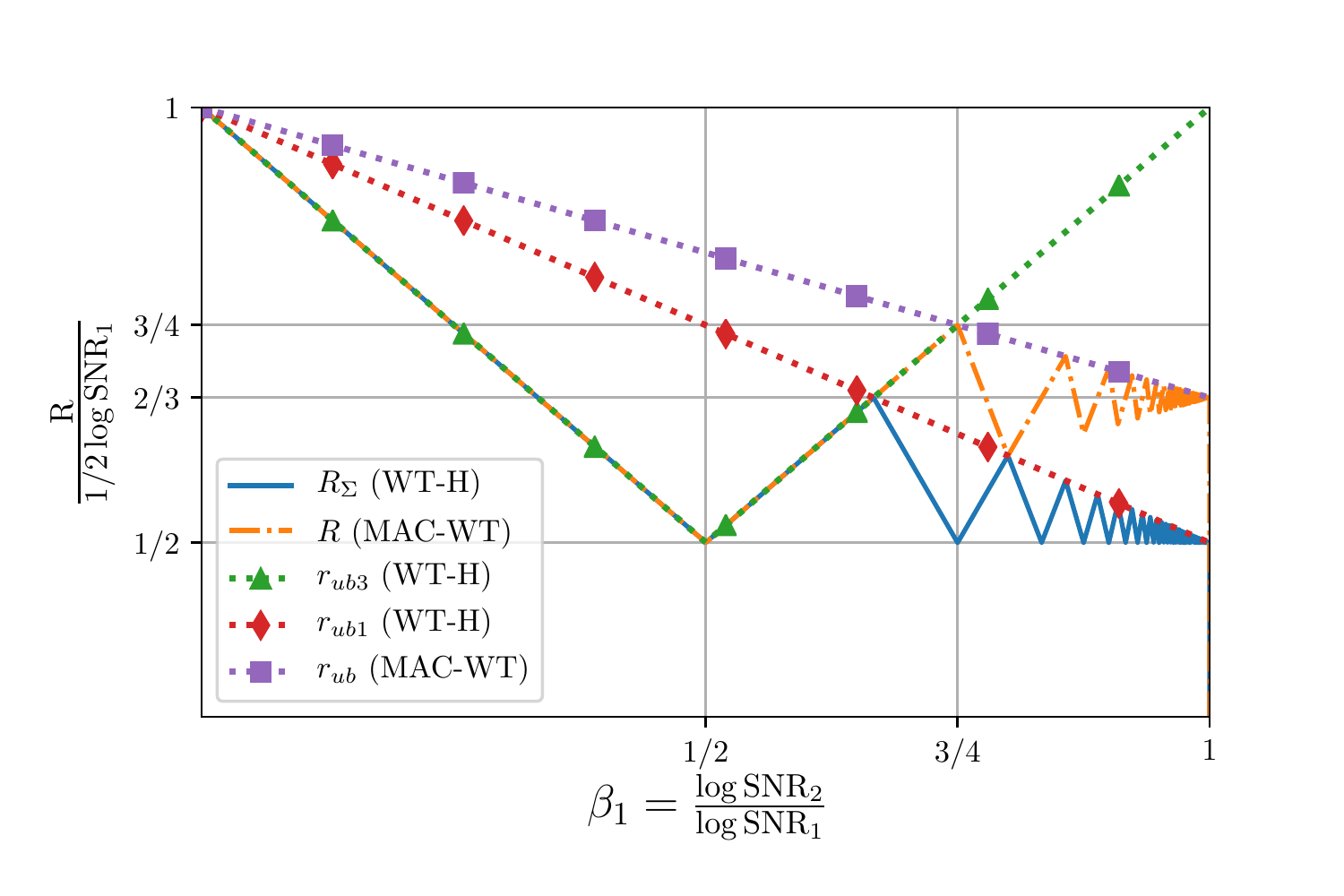}
\caption{Illustration of the achievable secrecy rates and upper bounds for the Gaussian WT-H and the  Gaussian MAC-WT in relation to the single-link scenario, i.e. normalized by $\log \text{SNR}_1$, and variation in the $\beta_1$ parameter, while $\beta_2$ is fixed at 1, i.e. vanishing private part.}
\label{beta_1}
\end{figure}
\subsection{Discussion}
Looking into Fig.~\ref{beta_1}, one can see the achievable rate normalized by the single-link signal-to-noise ratio, with varying parameter $\beta_1$ between $0$ and $1$, i.e. channel gain configurations. We note, that the theorems also give results for $\beta_1>1$, which are not shown since they would need separate figures due to scaling differences, but would basically show a mirrored (at the right hand side) picture. Moreover, it shows the results for $\beta_2=1$, which lets the private part vanish. Note that the private part just provides an off-set to the results above, and we therefore leave it out of the figure. Fig.~\ref{beta_1} therefore provides a look at the  secure g.d.o.f. of both models. One can see that the secure g.d.o.f converge to the s.d.o.f of $\frac{1}{2}$ for the G-WT-H, and $\frac{2}{3}$ for the G-MAC-WT for $\beta_1\rightarrow1$ and $\beta_2=1$, which agrees with the results of \cite{XieUlukusWiretap-Helper}. We can also see, that the achievable rate of both models fluctuates between the upper bound and a lower bound, for the part where the bit-level alignment scheme is dominant, which is in the range $\tfrac{2}{3}\leq \beta_1<1$ and $\tfrac{3}{4}\leq \beta_1<1$ for the WT-H and the MAC-WT, respectively. We believe that this is a result of the orthogonal bit-level alignment techniques which get transferred to the Gaussian model. A deterministic model with inter-dependent bit-levels, like the one used in \cite{Niesen-Ali}, could help to completely reach the upper bound, therefore resulting in a constant-gap sum-capacity result for that $\beta_1$ range. Moreover, as can be seen from the figure, we could not show a satisfying converse for the MAC-WT in the range of $0<\beta_1\leq \tfrac{3}{4}$. Comparing with the WT-H model, one would assume that a similar converse can be shown there. However, due to the jamming ability of both transmitters, which results in stochastic encoding functions, those converse techniques cannot be applied. Finally, we note that our scheme encounters a singularity at $\beta_1=1$, where no signal-scale diversity can be used and the alignment scheme fails. A possible solution could be to also utilize the channel phase, similar to \cite{CadambeJafarWangonMadsen}.

\section{Analysis of the Approximative Models}
\subsection{Proof of Theorem 1: Achievable Scheme for the Wiretap Channel with a Helper}
\label{ach-scheme-det-model}

{\bf Case $n_1\geq n_2$ :}
We denote the part of $\mathbf{X}_1$ and $\mathbf{X}_2$ in $\mathbf{Y}_{1,c}$ by $\mathbf{X}_{1,c}$ and $\mathbf{X}_{2,c}$, respectively. Moreover, we partition these common parts of the signals into $2n_\Delta$-bits partitions. We now utilize the first $n_\Delta$ bits of every {\it full} partition in $\mathbf{X}_{1,c}$ for messages and leave the remainder free. And for $\mathbf{X}_{2,c}$ we utilize the first $n_\Delta$-bits of every partition for jamming, while the rest is free. After partitioning, $\mathbf{Y}_{1,c}$ has a remainder part with $n_Q$ bit-levels.
The user signal in this remainder part follows the same rules as before, while the helper lets the first $n_\Delta$ bits free and only utilizes the bits afterwards for jamming, until we have filled all $n_Q$ bits. The private part $\mathbf{Y}_{1,p}$ can be used completely by the user, and all of $\mathbf{X}_1$ in this part can be used for messaging. The total achievable rate is the private rate $r_p=n_p$ plus the common rate
\begin{equation}
r_c=\frac{1}{2}\left(\left\lfloor \frac{n_c}{2n_\Delta}\right\rfloor2n_\Delta\right)+Q,
\end{equation}
where $Q$ is defined as in the theorem. The common rate follows from the fact that we utilize half of the bits of all $2n_\Delta$ partitions, along with a remainder part $Q$. In the remainder we utilize every bit, as long as $n_Q$ is smaller than $n_\Delta$. If $n_Q$ is larger than $n_\Delta$, we only utilize the first $n_\Delta$ bits.

{\bf Case $n_1 <n_2$: }
We use the same strategy as before, except for the remainder part $n_Q$ of $\mathbf{Y}_{1,c}$. In the remainder part, the first $n_\Delta$ bits of the user are left free, and all bits afterwards are used for messaging, while the Helper only jams the first $n_\Delta$ bits. The strategy is therefore the opposite as before. This yields a different $Q$-term, where for $n_Q<n_\Delta$ no rate is achieved, and for $n_Q\geq n_\Delta$ one can use $n_Q-n_\Delta$ bits for messaging. We note that the secrecy is provided by the (Crypto-) Lemma \ref{Crypto_lemma} and the fact that we use binary addition on each level as well as jamming signals chosen such that each bit is Bern($\tfrac{1}{2}$) distributed. And we therefore have that $I(\mathbf{X}^n_1;\mathbf{Z}^n)=0$.
\qed
\begin{lemma}[Crypto-Lemma, \cite{forney2004}]
Let $G$ be a compact abelian group with group operation $+$, and
let $Y = X + N$, where $X$ and $N$ are random variables over $G$ and $N$ is independent of $X$ and
uniform over $G$. Then $Y$ is independent of $X$ and uniform over $G$.
\label{Crypto_lemma}
\end{lemma}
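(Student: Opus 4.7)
The plan is to derive both conclusions---that $Y$ is uniform on $G$ and that $Y$ is independent of $X$---simultaneously by a single conditioning argument exploiting translation invariance of the uniform distribution. The crucial structural input is that on a compact abelian group $G$ the uniform distribution (i.e.\ normalized Haar measure) is the unique Borel probability measure that is invariant under every translation $n\mapsto x+n$.

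First, I would condition on the event $\{X=x\}$. Because $N$ is independent of $X$ and uniformly distributed on $G$, the conditional distribution of $Y=X+N$ given $X=x$ is the pushforward of the uniform distribution on $G$ under the translation map $n\mapsto x+n$. Since this map is a measurable bijection of $G$ and Haar measure is translation invariant, the pushforward coincides with the uniform measure itself. Hence $P_{Y\mid X=x}$ is uniform on $G$ for every $x$ in the support of $X$.

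Second, since the conditional law $P_{Y\mid X=x}$ does not depend on $x$, it must coincide with the marginal law of $Y$ obtained by integrating out $X$. This simultaneously yields $Y\sim\mathrm{Uniform}(G)$ and $P_{Y\mid X=x}=P_Y$ for every $x$, which is exactly the definition of independence of $Y$ and $X$. Both claims of the lemma thus come out of the same line of reasoning, with no further calculation.

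The main obstacle is making the translation-invariance step fully rigorous in the stated generality. For finite abelian groups it is a one-line counting argument (for each $y\in G$ the equation $y=x+n$ has exactly one solution $n$, so uniformity is preserved). For a general compact abelian group the cleanest route is to invoke existence and uniqueness of normalized Haar measure; alternatively one can verify invariance at the level of characters via $\mathbb{E}[\chi(x+N)]=\chi(x)\mathbb{E}[\chi(N)]$, observe that $\mathbb{E}[\chi(N)]=0$ for every nontrivial character $\chi$ because $N$ is uniform, and then conclude by Fourier inversion on $G$. Either route reduces the lemma to a standard fact from harmonic analysis on groups, which is why it is customarily cited rather than re-proved in the information-theory literature.
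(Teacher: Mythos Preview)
Your proof is correct and follows the standard argument: condition on $X=x$, invoke translation invariance of Haar measure to conclude that the conditional law of $Y$ given $X=x$ is uniform and hence independent of $x$, and read off both uniformity and independence from there. There is nothing to compare against, however, because the paper does not supply a proof of this lemma at all---it simply states it and cites Forney~\cite{forney2004}. As you yourself note in the final paragraph, this result is customarily cited rather than re-proved; the paper does exactly that.
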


\begin{remark}The bit-level shift between $\mathbf{X}_1$ and $\mathbf{X}_2$ of $n_\Delta$ bits makes it impossible to divide $\mathbf{Y}_1$ in exclusively private and common parts. In our division, the bottom $n_\Delta$ bits of $\mathbf{x}_{1,c}$ are only received at $\mathbf{Y}_1$ and are, therefore, private. Hence, the common rate $r_c$ is not purely made of common signal parts. Nevertheless, our choice of division reaches the upper bound and fits into the scheme.
\label{remark1}
\end{remark}

 \subsection{Proof of Theorem 3: Achievable Scheme for the LD-MAC-WT}\label{Ach_Scheme_LD_MAC_WT}
 For the LD-MAC-WT, due to symmetry, we may assume w.l.o.g. that $n_1>n_2$, where we leave out the case that $n_1=n_2$, see remark \ref{remark2}.
First of all, we look at the case that $n_2\geq n_E$. Our strategy is the same as before, i.e. to deploy a cooperative jamming scheme such that minimal jamming is done to $\mathbf{Y}_{1,c}$, while maximal jamming is received at $\mathbf{Y}_2$. We partition the common signals, $\mathbf{X}_{1,c}$ and $\mathbf{X}_{2,c}$, into $3n_\Delta$-bit parts and partition these parts again into $n_\Delta$-bit parts. For $\mathbf{X}_{1,c}$, in every $3n_\Delta$-bit part we use the first $n_\Delta$ bits for the message and the next $n_\Delta$ bits for jamming, while the last $n_\Delta$ bits will not be used. For $\mathbf{X}_{2,c}$, in every $3n_\Delta$-bit part, the first $n_\Delta$ bits will be used for jamming. The next $n_\Delta$ bits will be used for the message and the last $n_\Delta$ bits are left free. There will be a remainder part with 
\begin{equation}
n_Q= n_c \bmod 3n_\Delta \text{ bits}.
\end{equation}
The remainder part follows the same design rules as the $3n_\Delta$ parts, except that $\mathbf{X}_{2,c}$ leaves the first $n_\Delta$ bits free, then uses jamming on the next $n_\Delta$ bits and utilizes the last $n_\Delta$ bits for messaging, until $n_Q$ bits are allocated. The scheme is designed such that the jamming parts of $\mathbf{X}_{1,c}$ and $\mathbf{X}_{2,c}$ overlap at $\mathbf{Y}_{1,c}$, while the message parts of one signal overlap with the non-used part of the other signal. However, due to the signal strength difference $n_\Delta$, the jamming parts overlap with the messages at $\mathbf{Y}_{2}$, see Fig.~\ref{Design_Scheme_2}. Secure communication is therefore provided by the Crypto-lemma, as long as we use a Bern$(\tfrac{1}{2})$ distribution for the jamming bits. The whole private part can be used for messaging and its sum-rate is therefore $r_p=n_p$.
The achievable secure rate for the common part consists of the rate for the $3n_\Delta$ partitions and the remainder part. It can be seen that every $3n_\Delta$-part of $\mathbf{Y}_{1,c}$ allocates $2n_\Delta$ bits for the messages. This results in the common secrecy rate
\begin{equation}
r_c= (\lfloor \tfrac{n_c}{3n_\Delta} \rfloor3n_\Delta)\tfrac{2}{3} +Q,
\end{equation}
where $Q$ specifies the rate part of the remainder term. In the remainder part we allocate all remaining bits as message bits, as long as $n_Q < n_\Delta$. For $2n_\Delta > n_Q \geq  n_\Delta$, we allocate the first $n_\Delta$ bits of $n_Q$ for the message. And for $n_Q \geq 2n_\Delta$, we allocate the first $n_\Delta$ bits as well as the last $q$ bits, where $q$ is defined as
\begin{equation}
q=n_Q \bmod n_\Delta.
\end{equation}
This results in
\begin{equation}
Q=\begin{cases}
q & \text{for } n_Q< n_\Delta 
\\
n_\Delta & \text{for } 2n_\Delta > n_Q \geq  n_\Delta\\
n_\Delta + q & \text{for } n_Q \geq 2n_\Delta.
\end{cases}
\end{equation}
Together with the private rate term, we achieve
\begin{equation*}
R=\tfrac{2}{3}(\lfloor \tfrac{n_c}{3n_\Delta} \rfloor3n_\Delta)+n_p+Q.
\end{equation*}
For $n_2\geq n_E$ the achievable scheme is the same, except that we do not have a private part. We therefore have an achievable rate of 
\begin{equation*}
R=\tfrac{2}{3}(\lfloor \tfrac{n_c}{3n_\Delta} \rfloor3n_\Delta)+Q,
\end{equation*}
which completes the proof. \qed

\begin{figure}
\centering
\includegraphics[scale=0.75]{./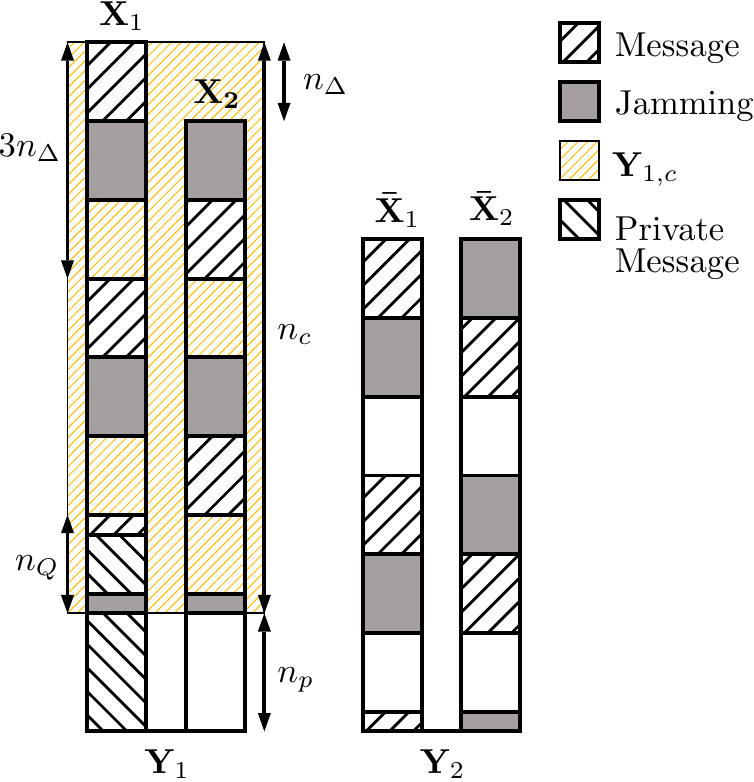}
\caption{Illustration of the achievable scheme for the LD-MAC-WT. The private part $\mathbf{Y}_{1,p}$ can be used freely and is allocated by User 1. The common part $\mathbf{Y}_{1,c}$ uses our alignment strategy. The strategy exploits the channel gain difference between both signals, to minimize the effect of jamming at the receiver $\mathbf{Y}_1$, while jamming all signal parts at the eavesdropper $\mathbf{Y}_2$.}
\label{Design_Scheme_2}
\end{figure}

\subsection{Proof of Theorem 2 and 4: Upper Bounds for the LD-WT-H and the LD-MAC-WT}
\label{LD-Converse}
The converse proofs for the Gaussian models are developed such that we can use the upper bounds of the rates for the linear deterministic models. We therefore leave out those proofs for the deterministic models due to space limitations. One can reconstruct the proofs by substituting the terms of the truncated deterministic model with the linear deterministic ones. The following table gives the correspondence between the terms: \begin{center}
\bgroup
\def\arraystretch{1.3}%
  \begin{tabular}{ c | c }
  	Truncated Model & LD Model\\
    \hline
    $Y_{1,\text{D}},Y_{2,\text{D}}$ &  $\mathbf{Y}_1,\mathbf{Y}_2$ \\ \hline
    $\lfloor h_{1}X_{1,\text{D}}\rfloor$ & $\mathbf{X}_1$ \\ \hline
    $\lfloor h_{2}X_{2,\text{D}} \rfloor$ &  $\mathbf{X}_2$ \\ \hline
    $\lfloor h_{E}X_{2,\text{D}} \rfloor$ & $\mathbf{\bar{X}}_2$\\ \hline
    $\lfloor h_{E}X_{1,\text{D}}\rfloor$ & $\mathbf{\bar{X}}_1$\\ \hline
  \end{tabular}
\egroup
\end{center}
We note that the indexing via levels, i.e. such as $(\mathbf{\bar{X}}_1)_{[:n_2]}$ is the same for both models.

\begin{remark} Our schemes rely on the signal strength difference between both users. Our scheme would not work, if $n_1=n_2$, while having equal channel gains at the eavesdropper. In that case we would not have any signal strength diversity to exploit which results in a singularity point where the secrecy rate is zero.
\label{remark2}
\end{remark}

\section{Analysis of the Gaussian wiretap channel with a helper}
\subsection{Proof of Theorem 5: Achievable Rate for the G-WT-H}
\label{WT_Helper_Achievable_Scheme}
The following procedure is inspired by the achievability proof in \cite{Bresler2010}.
We will look into the case that $\text{SNR}_1\geq \text{SNR}_2$, the other case follows similarly. For the achievable scheme, we need to partition the received signal-to-noise ratios at $Y_1$ into intervals of $\text{SNR}_1^{(1-\beta_1)}$. Each of these intervals plays the role of an $n_\Delta-$Interval of bit-levels in the linear deterministic scheme and has a power-to-noise ratio $\theta_l$, which is defined as 
\begin{IEEEeqnarray}{rCl}
\theta_{l} & = & q_{l-1}-q_{l}=\text{SNR}_1^{1-(l-1)(1-\beta_1)}-\text{SNR}_1^{1-l(1-\beta_1)}\IEEEyesnumber
\label{theta_k_fractions}
\end{IEEEeqnarray}
with $l$ indicating the specific level. The users decompose the signals $X_i$ into a sum of independent sub-signals $X_i=\sum_{l=1}^{l_{max}}X_{il}$. We will use $n$-dimensional nested lattice codes, introduced in \cite{UrezZamir}, which can achieve capacity in the AWGN single-user channel. 
\subsubsection{Nested Lattice Codes}
A lattice $\Lambda$ is a discrete subgroup of $\mathbb{R}^n$ which is closed under real addition and reflection. Moreover, denote the nearest neighbour quantizer by 
$
Q_{\Lambda}(\mathbf{x}):=\arg\min_{\mathbf{t}\in \Lambda} ||\mathbf{x}-\mathbf{t}||.
$ The fundamental Voronoi region $\mathcal{V}(\Lambda)$ of a lattice $\Lambda$ consists of all points which get mapped or quantized to the zero vector and the modulo operation for lattices is defined as
$
[\mathbf{x}]\bmod\Lambda := \mathbf{x}-Q_{\Lambda}(\mathbf{x}).
$

A nested lattice code is composed of a pair of lattices $(\Lambda_{\text{fine}},\Lambda_{\text{coarse}})$, where $\mathcal{V} (\Lambda_{\text{coarse}})$ is the fundamental Voronoi region of the coarse lattice and operates as a shaping region for the corresponding fine lattice $\Lambda_{\text{fine}}$. It is therefore required that $\Lambda_{\text{coarse}} \subset \Lambda_{\text{fine}}$. Such a code has a corresponding rate $R$ equal to the logarithm of the nesting ratio. A part of the split message is now mapped to the corresponding codeword $\mathbf{u}_i(l)\in \Lambda_{\text{fine},l-1} \cap \mathcal{V} (\Lambda_{\text{coarse},l})$, which is a point of the fine lattice inside the fundamental Voronoi region of the coarse lattice. Note that $\Lambda_{l_{max}} \subset \cdots\subset \Lambda_1$. 
The code is chosen such that it has a power of $\theta_{l}$. The codeword $\mathbf{x}_{i}(l)$ is now given as 
$
\mathbf{x}_{i}(l)=[\mathbf{u}_{i}-\mathbf{d}_{i}] \bmod  \Lambda_l,
$
where we dither (shift) with $\mathbf{d}_{i}\sim \mbox{Unif} (\mathcal{V}(\Lambda_l))$ and reduce the result modulo-$\Lambda_l$. Transmitter $i$ now sends a scaled $\mathbf{x}_{i}$ over the channel, such that the power per sub-signal $\mathbf{x}_{i}(l)$ is $\frac{\theta_{l}}{|h_{i1}|^2}$ and receivers see a power of $\theta_{l}$. Due to the partitioning construction, the $\mathbf{x}_{i}$ satisfy the power restriction of $P$ for user 1,
\begin{equation}
\sum\limits_{l=1}^{l_{\max}} \frac{\theta_{l}}{|h_{11}|^2} \leq \frac{\mbox{SNR}_{1}}{|h_{11}|^2} = P
\end{equation}
and user 2
\begin{equation}
\sum\limits_{l=2}^{l_{\max}} \frac{\theta_{l}}{|h_{21}|^2} \leq \frac{\mbox{SNR}_{2}}{|h_{21}|^2} = P.
\end{equation}

Moreover, aligning sub-signals use the same code (with independent shifts). In \cite{UrezZamir} it was shown that nested lattice codes can achieve the capacity of the AWGN single-user channel with vanishing error probability. Viewing each of our power intervals as a channel, we therefore have that
\begin{equation}
R(l)\leq \tfrac{1}{2}\log \left(1+\frac{\theta_{l}}{N(l)}\right),
\label{Decodingbound}
\end{equation}
where $N(l)$ denotes the noise variance per dimension of the sub-sequent levels. Now, $X_1$ is used for signal transmission, while $X_2$ is solely used for jamming. As in the deterministic case, the objective is to align the signal parts of $X_1$ with the jamming of $X_2$ at $Y_2$, while allowing decoding of the signal parts at $Y_1$. Due to the signal scale based coding strategy and the equal received signal-to-noise ratios at $Y_2$, an alignment is achieved with the proposed scheme. We use a jamming strategy, where the jamming sub-codeword is uniformly distributed on $\mathcal{V}$, therefore $\mathbf{x}_{2,\text{jam}}(l)\sim \mbox{Unif} (\mathcal{V}(\Lambda_l))$. Now, application of lemma \ref{Crypto_lemma} shows, that the received codeword $\mathbf{y}= [\mathbf{x}_1(l)+\mathbf{x}_{2,\text{jam}}(l)] \bmod \Lambda_l$ is independent of $\mathbf{x}_1(l)$, therefore providing secrecy. The only thing left to prove is that the signal can be decoded at $Y_1$.
The decoding is done level-wise, treating subsequent levels as noise. Every level is treated as a Gaussian point-to-point channel with power $\theta_l$ and noise $1+\text{SNR}_1^{1-l(1-\beta_1)}$, which consists of the base noise $N_1$ at $Y_1$ and the power of all subsequent levels of both signals. Successful decoding can be assured with a rate limitation (see \eqref{Decodingbound}) of 
\begin{equation}
r_l\leq \tfrac{1}{2}\log \left(1+\frac{\theta_l}{1+\text{SNR}_1^{1-l(1-\beta_1)}}\right).
\label{Decodingbound_actual}
\end{equation}
\subsubsection{Achievable rate}
As in the deterministic case, we have a private and common part and are defined equivalently. The common part depends on the strength of the received power at the eavesdropper ($n_c:=\min\{n_E+n_\Delta,n_1\}$ for $n_1\geq n_2$). Note that there is the correspondence \begin{equation}\label{beta_1_corr}
\beta_1=\frac{\log \text{SNR}_2}{\log \text{SNR}_1}\approx \frac{n_2}{n_1}
\end{equation} and also $\beta_2 \approx \tfrac{n_E}{n_1}$. Therefore, the part $n_E+n_\Delta$ corresponds to $\text{SNR}_1^{\beta_2+(1-\beta_1)}$ in the Gaussian model. The opposing remainder is therefore $\text{SNR}_1^{\beta_1-\beta_2}$ and we get the common power-to-noise ratio as 
\begin{equation}
\text{SNR}_c:=\text{SNR}_1-\max\{1,\text{SNR}_1^{\beta_1-\beta_2}\},
\end{equation}
 while the private part has a power-to-noise ratio of 
 \begin{equation}
\text{SNR}_p:=\max\{1,\text{SNR}_1^{\beta_1-\beta_2}\}-1.
 \end{equation} The private part will not be partitioned further, since it can be used completely and without penalty. Moreover, it has only the base noise and a rate of $r^p=\tfrac{1}{2}\log(1+\text{SNR}_p)$ can be achieved. For the common part, we use the deterministic achievable scheme and need to partition the available power-to-noise ratio. 
All odd levels $l$ of $X_1$  will be used for signal transmission. Every level $l$ can handle a rate of $r_l$. We can simplify the rate of \eqref{Decodingbound_actual} with
\begin{IEEEeqnarray*}{rCl}
\log \left(1+\frac{a-b}{1+b} \right)&= & \log \left(\frac{a+1}{1+b} \right)
\geq \log \left(\frac{a}{2b} \right)
\end{IEEEeqnarray*} where we used that $b>1$ to get $r_l\geq \tfrac{1}{2}\log \text{SNR}_1^{(1-\beta_1)}-\tfrac{1}{2}$, where the $1-l(1-\beta_1)$-terms get cancelled in the last fraction. Since we use the same scheme as in the deterministic case, we have a total of
\begin{equation}
l_{\text{u}}:= \left\lfloor \frac{\min\{1+\beta_2-\beta_1,1\}}{2(1-\beta_1)}\right\rfloor
\end{equation}
used levels in $X_1$, where the remainder term is not yet included. The alignment section of the common part has a total rate of
\begin{equation}
r^c=l_{\text{u}}(\tfrac{1}{2}\log \text{SNR}_1^{(1-\beta_1)}-\tfrac{1}{2}),
\label{Gauss_rate}
\end{equation}
which corresponds to $\lfloor \tfrac{n_c}{2n_\Delta}\rfloor n_\Delta $ in the deterministic case. Moreover, we need to consider the remainder term, which is allocated between the alignment structure and the noise floor or the private part. Once again we use the deterministic scheme as a basis. We see in \eqref{reminder_ld_helper}, that we have two cases for $n_1\geq n_2$, which corresponds to $\text{SNR}_1\geq \text{SNR}_2$. If the remainder has a power-to-noise ratio of 
\begin{IEEEeqnarray*}{rCl}
\text{SNR}_r&=&\text{SNR}_1^{1-2l_\text{u}(1-\beta_1)}-\text{SNR}_1^{\min\{ \beta_1-\beta_2,0\}}\\
&&<\:\text{SNR}_1^{1-2l_\text{u}(1-\beta_1)}-\text{SNR}_1^{1-(2l_\text{u}+1)(1-\beta_1)},
\end{IEEEeqnarray*} it achieves a rate of
\begin{equation*}
r^R \geq \tfrac{1}{2}\log \text{SNR}_1^{1-2l_\text{u}(1-\beta_1)} - \tfrac{1}{2}\log \text{SNR}_1^{\min\{ \beta_1-\beta_2,0\}}-\tfrac{1}{2}.
\end{equation*} Otherwise, it can only use a full partition, which leads to 
\begin{equation*}
r^R \geq \tfrac{1}{2}\log \text{SNR}_1^{(1-\beta_1)}-\tfrac{1}{2}.
\end{equation*}

We therefore get a total rate of 
$
r_{\text{ach}}=r^p+r^c+r^R.
$
The case for $\text{SNR}_2> \text{SNR}_1$ can be shown similarly.
 \qed

\subsection{Proof of Theorem 6: Developing a Converse from LD-Bounds}\label{Developing a Converse from LD-Bounds} 
Ideally, we want to re-use the insights (and results) from the deterministic model to derive an upper bound for the Gaussian channel. However, there is no direct constant-gap relation for the relevant models between the Gaussian channel version and the linear deterministic model. It was shown in \cite{Mukherjee2017} that the integer-input integer-output model of the MAC-WT, is within a constant-gap of the G-MAC-WT. This result can also be used for the G-WT-H. The system equations for the integer-input integer-output model can be written as 
\begin{IEEEeqnarray}{rCl}\label{Integer_Model}
\bar{Y}_{1,\text{D}}&=&\lfloor h_{11}\bar{X}_{1,\text{D}}\rfloor +\lfloor h_{21}\bar{X}_{2,\text{D}} \rfloor\IEEEyessubnumber\\
\bar{Y}_{2,\text{D}}&=&\lfloor h_{22}\bar{X}_{2,\text{D}} \rfloor + \lfloor h_{12}\bar{X}_{1,\text{D}}\rfloor \IEEEyessubnumber,
\end{IEEEeqnarray} where the $\bar{X}_1^D\in \{0,1,\ldots,\lfloor \sqrt{\text{SNR}}\rfloor\}$. One can construct these codewords easily from a set of given codewords for the Gaussian case by $\lfloor X_{\text{G}}\rfloor \bmod \lfloor \sqrt{P}\rfloor$. It was shown in \cite{Mukherjee2017}, that the mutual information terms for the integer-input integer-output channel \eqref{Integer_Model} are within a constant-gap\footnote{It was actually stated that both terms are within $o(\log P)$. However, the result also satisfies the stronger notion of a constant-gap.} of the corresponding Gaussian model \eqref{Gauss_Model_Helper}, which means that
\begin{IEEEeqnarray}{rCl}\label{Constant_Gap}
I(W_1,W_2;\bar{Y}_{1,\text{D}}^n)&\leq& I(W_1,W_2;Y_{1,\text{G}}^n)+nc\IEEEyessubnumber\\
I(W_1,W_2;Y_{2,\text{G}}^n)&\leq& I(W_1,W_2;\bar{Y}_{2,\text{D}}^n)+nc\IEEEyessubnumber,
\end{IEEEeqnarray} where $c$ is a constant. The first inequality follows from a proof in \cite{Bresler2008} and a more detailed version of the same ideas in \cite{Davoodi2016}. The second inequality builds on lemmata and ideas from \cite{Bresler2008,Davoodi2016} and  \cite{Avestimehr2011}.
We therefore have that 
\begin{IEEEeqnarray*}{rCl}\label{Converse-Start}
n(R_1+R_2)&=& I(W_1,W_2;Y_{1,\text{G}}^n)-I(W_1,W_2;Y_{2,\text{G}}^n)+n\epsilon\\
&\leq & I(W_1,W_2;\bar{Y}_{1,\text{D}}^n)-I(W_1,W_2;\bar{Y}_{2,\text{D}}^n)\IEEEyesnumber\\
&&+n(c+\epsilon),
\end{IEEEeqnarray*} which shows that any bound for the integer-input integer-output model can be used as an outer bound for the corresponding Gaussian model. Now, to bring the LDM ideas to the truncated model, we modify the form such that $\bar{X}_1^D$ is represented\footnote{For the time being, we use $n$ as the index of the bit-level as well as the sequence index. This will be distinguishable later on, since the bit-level index will always have a subscript indicating the specific channel gain.} as
\begin{equation}
X_{1,D}=2^n\sum_{b=1}^n \tilde{X}_{1,b}2^{-b}\in \{0,1,\ldots,2^n-1\},
\end{equation}
where $n=\lfloor \log \lfloor\sqrt{\text{SNR}} \rfloor \rfloor$ and $\tilde{X}_{1,b}\in\mathbb{F}_2$. Note that the floor function around the logarithm, i.e. the quantization from integers to powers of two, reduces the cardinality of the input constellation by at most half plus one-half, which results in a maximum bit-gap of 2 bits in the capacity results for high-SNR. We can therefore work with the model 
\begin{IEEEeqnarray}{rCl}\label{Truncated-Model1}
Y_{1,\text{D}}&=&\lfloor h_{11}X_{1,\text{D}}\rfloor +\lfloor h_{21}X_{2,\text{D}} \rfloor\IEEEyessubnumber\\
Y_{2,\text{D}}&=&\lfloor h_{22}X_{2,\text{D}} \rfloor + \lfloor h_{12}X_{1,\text{D}}\rfloor \IEEEyessubnumber,
\end{IEEEeqnarray} where $h_{ij}X_{i,D}=h_{ij}2^{n_{ij}}\sum_{b=1}^{n_{ij}} \tilde{X}_{i,b}2^{-b}$, $h_{ij}\in [1,2)$ and the $n_{ij}$ correspond to the bit-levels in the LD model. We now include the simplifying notation changes from the deterministic model and the assumption on equal received power at the wiretaper and write the model as
\begin{IEEEeqnarray}{rCl}\label{Truncated_Model2}
Y_{1,\text{D}}&=&\lfloor h_{1}X_{1,\text{D}}\rfloor +\lfloor h_{2}X_{2,\text{D}} \rfloor\IEEEyessubnumber\\
Y_{2,\text{D}}&=&\lfloor h_{E}X_{2,\text{D}} \rfloor + \lfloor h_{E}X_{1,\text{D}}\rfloor \IEEEyessubnumber.
\end{IEEEeqnarray} 
 We will call this model the truncated deterministic model (TDM).
 For the converse proofs we will also need the following lemmata. Note that the following lemmata results and ideas were already used for example in the converse proof in \cite{CGeng2015} but without rigorous justification. Moreover, the first lemma uses ideas from a proof in \cite{Bresler2008}.

\begin{lemma}\label{Lemma_entropy_floorfunc_equal_bits}
For an arbitrary signal $X_{D}\in\{0,1,\ldots,2^n-1\}$, with $n\in \mathbb{N}$ and channel gain $h\in [1,2)$ we have that 
\begin{equation*}
H(\lfloor hX_D \rfloor)=H(\tilde{X}_1,\ldots,\tilde{X}_n),
\end{equation*} where $\tilde{X}_i\in \mathbb{F}_2$ are such that $X_{D}=2^n\sum_{i=1}^n \tilde{X}_i2^{-i}$.

\end{lemma}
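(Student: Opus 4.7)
The plan is a one-step entropy argument: show that each of the two quantities $H(\lfloor h X_D \rfloor)$ and $H(\tilde{X}_1,\ldots,\tilde{X}_n)$ equals $H(X_D)$, because each of the underlying maps is a bijection on $\{0,1,\ldots,2^n-1\}$. Nothing probabilistic about the law of $X_D$ is used; only the alphabet matters.

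First, I would observe that $X_D = \sum_{i=1}^n \tilde{X}_i\, 2^{n-i}$ is exactly the standard binary representation of an integer in $\{0,1,\ldots,2^n-1\}$, so the map $(\tilde{X}_1,\ldots,\tilde{X}_n) \leftrightarrow X_D$ is a bijection between $\mathbb{F}_2^n$ and $\{0,1,\ldots,2^n-1\}$. Since entropy is invariant under deterministic bijections, $H(X_D) = H(\tilde{X}_1,\ldots,\tilde{X}_n)$.

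The only real content is to show that the map $f : a \mapsto \lfloor h a \rfloor$ is injective on the integer alphabet $\{0,1,\ldots,2^n-1\}$ whenever $h \in [1,2)$. Pick two distinct elements $a,b$ of this set with $a<b$. Then $b-a \geq 1$, and since $h \geq 1$ we get $hb - ha = h(b-a) \geq 1$, i.e.\ $hb \geq ha + 1$. Applying the floor,
\begin{equation*}
\lfloor hb \rfloor \;\geq\; \lfloor ha + 1 \rfloor \;=\; \lfloor ha \rfloor + 1 \;>\; \lfloor ha \rfloor,
\end{equation*}
using the identity $\lfloor x+1 \rfloor = \lfloor x \rfloor + 1$ for any real $x$. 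Hence $f(a) \neq f(b)$, so $f$ is injective on $\{0,1,\ldots,2^n-1\}$, which means $\lfloor h X_D \rfloor$ is a bijective deterministic function of $X_D$ and therefore $H(\lfloor h X_D \rfloor) = H(X_D)$. Combining the two equalities gives the claim.

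I do not expect any genuine obstacle; the main thing to be careful about is the edge cases $h=1$ (where $\lfloor h X_D \rfloor = X_D$ trivially since $X_D$ is already an integer) and $h$ arbitrarily close to $2$ (the inequality $h(b-a) \geq 1$ only requires $h \geq 1$, so the upper end of the interval plays no role). Because the argument is purely combinatorial and does not touch the distribution of $X_D$, it is also uniform over any coding distribution used later in the converse.
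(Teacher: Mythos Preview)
Your proof is correct and follows essentially the same approach as the paper: both establish the result by exhibiting the bit-tuple map and the map $a\mapsto\lfloor ha\rfloor$ as bijections (injections onto their image), then invoke invariance of entropy under such maps. Your treatment of the floor map is in fact a bit more careful, since you give an explicit monotonicity inequality $\lfloor hb\rfloor\geq\lfloor ha\rfloor+1$ for $a<b$, whereas the paper only argues informally that multiplying by $h\geq1$ does not reduce inter-point distances and that flooring therefore cannot reduce cardinality.
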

\begin{proof}
We denote the tuple $(\tilde{X}_1,\ldots,\tilde{X}_n)\in \mathbb{F}_2^n$ by $\tilde{\mathbf{X}}$. There is a bijection $f_1:\mathbb{F}_2^n \rightarrow \{0,1,\ldots,2^n-1\}$ which can be constructed as $f_1(\tilde{\mathbf{X}})=2^n\sum_{i=1}^n \tilde{X}_i2^{-i}$. Now, the resulting integers are distance one apart. Therefore multiplying by $h\in [1,2)$ does not lower the distance. Quantizing those scaled values to the integer part only introduces gaps in the support, but does not reduce the cardinality. We therefore have that $f_2(X_D)=\lfloor h X_D\rfloor$ is again a bijection. Therefore, the composition of both functions $f_3=f_2 \circ f_1$ is a bijection and we have that 
\begin{equation*}
H(f_3(\tilde{\mathbf{X}}))=H(\tilde{\mathbf{X}})
\end{equation*} which shows the result.
\end{proof}

\begin{lemma}\label{Lemma_floorsplitting}
For an arbitrary signal $X_{D}\in\{0,1,\ldots,2^n-1\}$, with $n,m\in \mathbb{N}$, $m<n$, $X_{D}=2^n\sum_{i=1}^n \tilde{X}_i2^{-i}$, $\tilde{X}\in \mathbb{F}_2$ and channel gain $h\in [1,2)$ we have that 
\begin{IEEEeqnarray*}{rCl}
&&H(\lfloor h 2^n\sum_{i=1}^n \tilde{X}_i2^{-i} \rfloor)\\
&&=\:H(\lfloor h 2^n\sum_{i=1}^m \tilde{X}_i2^{-i} \rfloor+ \lfloor h 2^n\sum_{i=m+1}^n \tilde{X}_i2^{-i} \rfloor)
\end{IEEEeqnarray*}
\end{lemma}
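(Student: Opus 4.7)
The plan is to apply Lemma~\ref{Lemma_entropy_floorfunc_equal_bits} on the left and then reduce the right-hand side to the same quantity via an injectivity argument. Write
\[
A := 2^n\sum_{i=1}^m \tilde{X}_i 2^{-i}, \qquad B := 2^n\sum_{i=m+1}^n \tilde{X}_i 2^{-i},
\]
so that $X_D = A+B$. By construction $A$ ranges over the multiples of $2^{n-m}$ in $\{0,\ldots,(2^m-1)2^{n-m}\}$, while $B$ ranges over $\{0,\ldots,2^{n-m}-1\}$, so the map $(\tilde X_1,\ldots,\tilde X_n)\mapsto(A,B)$ is already a bijection. Lemma~\ref{Lemma_entropy_floorfunc_equal_bits} identifies the left-hand side of the claim with $H(\tilde X_1,\ldots,\tilde X_n)$. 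It therefore suffices to show that the map $(A,B)\mapsto \lfloor hA\rfloor+\lfloor hB\rfloor$ is injective, so that the right-hand side evaluates to the same entropy.

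The main step is a gap estimate. For fixed $A$, the map $B\mapsto \lfloor hB\rfloor$ is strictly increasing, because $h\ge 1$ forces $\lfloor h(B+1)\rfloor-\lfloor hB\rfloor \ge \lfloor h\rfloor = 1$. When $A$ changes, using superadditivity $\lfloor x+y\rfloor\ge \lfloor x\rfloor+\lfloor y\rfloor$ together with $A_2-A_1\ge 2^{n-m}$, I would argue
\[
\lfloor hA_2\rfloor - \lfloor hA_1\rfloor \;\ge\; \lfloor h(A_2-A_1)\rfloor \;\ge\; \lfloor h\cdot 2^{n-m}\rfloor.
\]
Since $h\ge 1$, one has $h\cdot 2^{n-m}\ge h(2^{n-m}-1)+1$, which rearranges to
\[
\lfloor h\cdot 2^{n-m}\rfloor \;\ge\; \lfloor h(2^{n-m}-1)\rfloor+1 \;>\; \max_{B}\lfloor hB\rfloor.
\]
Hence any change in $A$ produces a shift in $\lfloor hA\rfloor+\lfloor hB\rfloor$ that no choice of $B$ can compensate, and combined with strict monotonicity in $B$ this yields injectivity. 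Composing with the bijection $(\tilde X_1,\ldots,\tilde X_n)\leftrightarrow (A,B)$ gives $H(\lfloor hA\rfloor+\lfloor hB\rfloor) = H(\tilde X_1,\ldots,\tilde X_n)$, matching the left-hand side.

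The only delicate point I anticipate is the gap estimate above: it is where the hypothesis $h\in[1,2)$ is used, though only the lower bound $h\ge 1$ actually enters here (the upper bound already plays its role in the cited Lemma~\ref{Lemma_entropy_floorfunc_equal_bits}). The rest is purely bookkeeping of bijections, so no further subtlety is expected.
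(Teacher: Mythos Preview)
Your argument is correct and follows essentially the same route as the paper: both establish that the map $(A,B)\mapsto\lfloor hA\rfloor+\lfloor hB\rfloor$ is injective by observing that the gap between consecutive values of $\lfloor hA\rfloor$ exceeds the full range of $\lfloor hB\rfloor$, which the paper phrases as the sum-set having Cartesian-product structure. Your gap estimate makes explicit the step the paper leaves to the remark ``the same holds for the scaled integer parts, since they have the same scaling,'' so if anything your version is slightly more rigorous.
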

\begin{proof}
The first entropy term contains $2^n\sum_{i=1}^n \tilde{X}_i2^{-i} \in \{0,1,\ldots,2^n\!-\!1\}$. As argued previously, the support has distance one, and multiplying by the channel gain and taking the integer part only introduces gaps in the support and scales the values up, but the cardinality stays the same. Therefore, $|\supp(X_D)|=|\supp(\lfloor hX_D\rfloor)|=2^n$. Now, the same is true for 
\begin{equation*}
\underline{X}_D:=2^n\sum_{i=m+1}^n \tilde{X}_i2^{-i}\in \{0,1,\ldots, 2^{n-m}\!-\!1\}.
\label{Lemma3_EQ1}
\end{equation*}
It also holds for 
\begin{IEEEeqnarray*}{rCl}
&&\bar{X}_D:=2^n\sum_{i=1}^m \tilde{X}_i2^{-i}\\
&&\in\: \{0,2^{n-m}, 2^{n-m+1},2^{n-m}+2^{n-m+1},\ldots,2^n\!-\!2^{n-m}\},
\label{Lemma3_EQ2}
\end{IEEEeqnarray*}
where the distance is $2^{n-m}> 1$, since $n>m$. Moreover, we have that $X_D=\bar{X}_D+\underline{X}_D$. The cardinality of the support of $\bar{X}_D$ is 
\begin{equation*}
\supp(\bar{X}_D)=\frac{2^n-2^{n-m}}{2^{n-m}}+1=2^m.
\end{equation*} 
Now, due to the structure\footnote{In particular because the biggest element of $\underline{X}_D$ is still smaller than the smallest distance in $\bar{X}_D$.}, the sum between $\underline{X}_D$ and $\bar{X}_D$ yields a Cartesian product between the support sets, and we therefore have that
\begin{IEEEeqnarray*}{rCl}
|\supp(\underline{X}_D+\bar{X}_D)|=2^n&=&2^{n-m}2^m\\
&=&|\supp(\underline{X}_D)||\supp(\bar{X}_D)|,
\end{IEEEeqnarray*}
for the support of the sum-set. The same holds for the scaled integer parts, since they have the same scaling and therefore
\begin{IEEEeqnarray*}{rCl}
|\supp(\lfloor h(\underline{X}_D+\bar{X}_D) \rfloor)|&=&|\supp(\underline{X}_D+\bar{X}_D)|\\
&=&|\supp(\underline{X}_D)||\supp(\bar{X}_D)|\\
&=&|\supp(\lfloor h\underline{X}_D\rfloor )||\supp(\lfloor h\bar{X}_D \rfloor)|\\
&=&|\supp(\lfloor h\underline{X}_D\rfloor +\lfloor h\bar{X}_D \rfloor)|,
\end{IEEEeqnarray*} which proves the result.
\end{proof} Moreover, we introduce the function 
\begin{IEEEeqnarray*}{rCl}
f_{[a:b]}(\lfloor hX_D \rfloor )&=&(\lfloor h X_D\rfloor)_{[a:b]}=\lfloor h_{ij}2^{n_{ij}}\sum_{k=a}^{b} \tilde{X}_k2^{-k}\rfloor,
\end{IEEEeqnarray*} which restricts the exponents of the binary expansion inside the term to lie in the set of integers $\{a,a+1,\ldots,b\}$.
The result of Lemma \ref{Lemma_floorsplitting} can then be written as \begin{equation*}
H(\lfloor h X_D\rfloor )=H((\lfloor h X_D \rfloor )_{[1:m]}+(\lfloor h X_D \rfloor)_{[m+1:n]}).
\end{equation*}
 If the term is a sum of two signals, then both get restricted relative to the stronger part. Therefore, a signal
\begin{equation}
Y_D=\lfloor h_1 2^n \sum_{i=1}^{n} \tilde{X}_{1,i}2^{-i} \rfloor + \lfloor h_2 2^m \sum_{i=1}^{m} \tilde{X}_{2,i}2^{-i} \rfloor,
\end{equation} where $n>m$, can be restricted to 
\begin{equation*}
(Y_D)_{[1:a]}=\lfloor h_1 2^n \sum_{i=1}^{a} \tilde{X}_{1,i}2^{-i} \rfloor + \lfloor h_2 2^m \sum_{i=1}^{a-(n-m)} \tilde{X}_{2,i}2^{-i} \rfloor.
\end{equation*} Moreover, we use the notation also on the bit-tuples to indicate that $(\tilde{X}_1,\ldots,\tilde{X}_n)\in \mathbb{F}_2^n$ by $\tilde{\mathbf{X}}$ is restricted to the bits $a$ to $b$, such that $(\tilde{X}_a,\ldots,\tilde{X}_b)$ is denoted as $(\tilde{\mathbf{X}})_{[a:b]}$. The notation is therefore the same as for the bit-vectors in the linear deterministic model.
We start with the G-WT-H version of equation \eqref{Converse-Start} and convert the steps of the proof for the linear deterministic case to the truncated deterministic model. 
\begin{IEEEeqnarray*}{rCl}
n(R-\epsilon)&=& I(W;Y_{1,\text{G}}^n)-I(W;Y_{2,\text{G}}^n)\\
&\leq & I(W;Y_{1,\text{D}}^n)-I(W;Y_{2,\text{D}}^n)+nc\\
&\leq & I(W;Y_{1,\text{D}}^n)-I(W;(Y_{2,\text{D}})_{[1:n_2]}^n)+nc\\
&= & H(Y_{1,D}^n)-H(Y_{1,D}^n|W)-H((Y_{2,D}^n)_{[:n_2]})\\
&&\: +H((Y_{2,D}^n)_{[:n_2]}|W)+nc\\
&=& H(Y_{1,D}^n)-H((Y_{2,D}^n)_{,[:n_2]})+H((\lfloor h_{E} X_{2,D}^n\rfloor )_{[:n_2]})\\
&&-\:H(\lfloor h_{2}X_{2,D}^n\rfloor )+nc,
\end{IEEEeqnarray*}
where Fanos inequality and the secrecy constraint was used. Moreover, we used the fact that $I(W;Y_{2,D}^n)\geq I(W;f(Y_{2,D}^n))$ for arbitrary functions $f$, due to the data processing inequality. Note that for $n_2\geq n_E$, we have that $(Y_{2,D}^n)_{[:n_2]}=Y_{2,D}^n$. In the last line we used that $X_{1,D}$ is a function of $W$, and $X_{2,D}$ is independent of $W$, due to the Helper model assumptions. We remark that the first property does not hold in general, since jamming through the first user would result in a stochastic function. Now, for $n_E\geq n_2$, both terms $\lfloor h_{2}X_{2,D}^n\rfloor$ and 
$(\lfloor h_{E} X_{2,D}^n\rfloor )_{[:n_2]}$ have the same bits, and we can use lemma \ref{Lemma_entropy_floorfunc_equal_bits} to show that
\begin{equation*}
H((\lfloor h_{E} X_{2,D}^n\rfloor )_{[:n_2]})-H(\lfloor h_{2}X_{2,D}^n\rfloor)=0
\end{equation*}
and for $n_E< n_2$ the second term contains more bits, and we can therefore use the chain rule and lemma \ref{Lemma_entropy_floorfunc_equal_bits} and show that
\begin{IEEEeqnarray}{rCl}\label{reminder_x2_helper_LT_converse}
&&H((\lfloor h_{E} X_{2,D}^n\rfloor )_{[:n_2]})-H(\lfloor h_{2}X_{2,D}^n\rfloor)\IEEEnonumber\\
&&= -H((\mathbf{\tilde{X}}_2^n)_{[n_E+1:]}|(\mathbf{\tilde{X}}_2^n)_{[:n_{E}]}).
\end{IEEEeqnarray}
We now split
the received signals in common and private parts. We start by adding two of the terms and split them apart
\begin{IEEEeqnarray*}{rCl}
&&2(H(Y_{1,D}^n)-H((Y_{2,D}^n)_{[:n_2]})) \\
&\leq & 2H((Y_{1,D}^n)_{[n_c+1:]})+ 2H((Y_{1,D}^n)_{[:n_c]})-2H((Y_{2,D}^n)_{[:n_2]}).
\end{IEEEeqnarray*}
Note that the private part $H((Y_{1,D}^n)_{[n_c+1:]})$ is zero for $n_{1}\leq n_{E}$. Now, counting from top to bottom, for $n_1\geq n_2$, $X_{1,D}^n$ has $n_c$ bit-levels in $(Y_{1,D})_{[:n_c]}$, while $X_{2,D}^n$ has $\eta:=\min\{n_E,\min\{n_{1},n_{2}\}\}=n_c-n_\Delta$ bit-levels. Therefore, $\eta$ represents the amount of bit-levels of the weaker signal in the common received signal part. Hence, for $n_2>n_1$, $X_{1,D}^n$ and $X_{2,D}^n$ have $\eta$ and $n_c$ bit-levels in that term, respectively. We need to account for this switch of indexing in the next part, where we analyse the entropy difference. We will use a method inspired by \cite{Fritschek2014} to show the following (for $n_1 \geq n_2$)
\begin{IEEEeqnarray*}{rCl}
&&2(H(Y_{1,D}^n)-H((Y_{2,D}^n)_{[:n_2]}) \\
&\leq & 2H((Y_{1,D}^n)_{[n_c+1:]})+ 2H((Y_{1,D}^n)_{[:n_c]})\\
&&-\:H((Y_{2,D}^n)_{[:n_2]}|\mathbf{\tilde{X}}_1^n)-H((Y_{2,D}^n)_{[:n_2]}|\mathbf{\tilde{X}}_2^n)\\
&= & 2H((Y_{1,D}^n)_{[n_c+1:]})+ 2H((Y_{1,D}^n)_{[:n_c]})\\
&&\:-H((\mathbf{\tilde{X}}_{2}^n)_{[:n_2]})-H((\mathbf{\tilde{X}}_{1}^n)_{[:n_2]})\\
&= & 2H((Y_{1,D}^n)_{[n_c+1:]})+ H((Y_{1,D}^n)_{[:n_c]})-H((\mathbf{\tilde{X}}_{2}^n)_{[:n_2]})\\
&&\:+H((f(\lfloor h_{1}X_{1,\text{D}}^n\rfloor ,\lfloor h_{2}X_{2,\text{D}}^n \rfloor) )_{[:n_c]})-H((\mathbf{\tilde{X}}_{1}^n)_{[:n_2]})\\
&\leq & 2H((Y_{1,D}^n)_{[n_c+1:]})+ H((Y_{1,D}^n)_{[:n_c]})-H((\mathbf{\tilde{X}}_{2}^n)_{[:n_2]})\\
&&\:+H( (\lfloor h_{2}X_{2,\text{D}}^n \rfloor)_{[:\eta]})+ H((\lfloor h_1X_{1,\text{D}}^n\rfloor)_{[:n_c]})-H((\mathbf{\tilde{X}}_{1}^n)_{[:n_2]})\\ 
&=& 2H((Y_{1,D}^n)_{[n_c+1:]})+ H((Y_{1,D}^n)_{[:n_c]})-H((\mathbf{\tilde{X}}_{2}^n)_{[:n_2]})\\
&&\:+H( (\mathbf{\tilde{X}}_{2}^n)_{[:\eta]})+ H((\mathbf{\tilde{X}}_{1}^n)_{[:n_c]})-H((\mathbf{\tilde{X}}_{1}^n)_{[:n_2]}),
\end{IEEEeqnarray*}

We now have for $n_1\geq n_2$ that 
\begin{equation*}
H((\mathbf{\tilde{X}}_{1}^n)_{[:n_c]})-H((\mathbf{\tilde{X}}_{1}^n)_{[:n_2]})\leq n(n_c-\min\{n_2,n_E\})^+\leq nn_\Delta,
\end{equation*} and 
\begin{equation*}
H( (\mathbf{\tilde{X}}_{2}^n)_{[:\eta]})-H((\mathbf{\tilde{X}}_{2}^n)_{[:n_2]})\leq n(\eta-\min \{n_2,n_E\})^+=0.
\end{equation*}
And for $n_2>n_1$ we get
\begin{equation*}
H((\mathbf{\tilde{X}}_{1}^n)_{[:\eta]})-H((\mathbf{\tilde{X}}_{1}^n)_{[:n_2]})\leq n(\eta-\min \{n_2,n_E\})^+=0,
\end{equation*} and 
\begin{equation*}
H( (\mathbf{\tilde{X}}_{2}^n)_{[:n_c]})-H((\mathbf{\tilde{X}}_{2}^n)_{[:n_2]})\leq n(n_c-\min\{n_2,n_E\})^+.
\end{equation*}
We remark that the last term gets $(n_2-n_E)^+$ for $n_1<n_E<n_2$, in which case we can use \eqref{reminder_x2_helper_LT_converse}, which has a length of $(n_2-n_E)$ bit-levels. Also for $n_E<n_1<n_2$ we have that $n(n_c-\min\{n_2,n_E\})^+=nn_\Delta$, by using \eqref{reminder_x2_helper_LT_converse} again, we see that for $n_2>n_1$
\begin{equation*}
H( (\mathbf{\tilde{X}}_{2}^n)_{[:n_c]})-H((\mathbf{\tilde{X}}_{2}^n)_{[:n_2]})\leq n(n_c-\min\{n_2,n_E\})^+=0.
\end{equation*}
We therefore have an additional term of $nn_\Delta$ for $n_1\geq n_2$.
Now one can divide all terms by two, resulting in
\begin{IEEEeqnarray*}{rCl}
&&H(Y_{1,D}^n)-H((Y_{2,D}^n)_{[:n_2]}) \\
&\leq & H((Y_{1,D}^n)_{[n_c+1:]})+ \frac{1}{2}H((Y_{1,D}^n)_{[:n_c]})+\frac{n}{2}(n_1-n_2)^+.
\end{IEEEeqnarray*}
Plugging all the results into the first equation yields 
\begin{IEEEeqnarray*}{rCl}
n(R-\epsilon)&\leq& n(n_p+\tfrac{1}{2}n_c+\tfrac{1}{2}(n_1-n_2)^++c).
\end{IEEEeqnarray*}
dividing by n and letting $n\rightarrow \infty$ shows the result.

For the case that $n_2>2n_1$ we have that
\begin{IEEEeqnarray*}{rCl}
n(R-\epsilon) &\leq & H(Y_{1,D}^n)-H(Y_{2,D}^n)+H(\lfloor h_{E} X_{2,D}^n\rfloor )\\
&&-\:H(\lfloor h_{2}X_{2,D}^n\rfloor )+nc\\
& \leq & H(\lfloor h_{1} X_{1,D}^n\rfloor)+H(\lfloor h_{2} X_{2,D}^n\rfloor )-H(Y_{2,D}^n|\mathbf{\tilde{X}}_1^n)\\
&&-\:H(\lfloor h_{2}X_{2,D}^n\rfloor )+H(\lfloor h_{E} X_{2,D}^n\rfloor )+nc\\
&=& H(\lfloor h_{1} X_{1,D}^n\rfloor)\leq n n_{1}
\end{IEEEeqnarray*}
and for the case that $3n_2<2n_1$ we have that
\begin{IEEEeqnarray*}{rCl}
nR &\leq & I(W;Y_{1,D}^n)-I(W;Y_{2,D}^n)+n(\epsilon+c)\\ 
&\leq & I(W;Y_{1,D}^n)-I(W;(Y_{2,D}^n)_{[:n_1-n_2]})+n(\epsilon+c)\\
&\leq &  H(Y_{1,D}^n)-H((Y_{2,D}^n)_{[:n_1-n_2]})\\
&&+\:H((\lfloor h_{E} X_{2,D}^n\rfloor )_{[:n_1-n_2]})-H(\lfloor h_{2} X_{2,D}^n\rfloor)+n(\epsilon+c)\\
&\leq &  H((Y_{1,D}^n)_{[:(n_1-n_2)]})-H((Y_{2,D}^n)_{[:n_1-n_2]}| \mathbf{\tilde{X}}_2^n)\\
&&+\:H((Y_{1,D}^n)_{[(n_1-n_2)+1:]}|(Y_{1,D}^n)_{[:(n_1-n_2)]})\\
&&+\:H((\lfloor h_{E} X_{2,D}^n\rfloor )_{[:n_1-n_2]})-H(\lfloor h_{2} X_{2,D}^n\rfloor)+n(\epsilon+c).
\end{IEEEeqnarray*}
One can show that
 \begin{IEEEeqnarray*}{rCl}
&&H((Y_{1,D}^n)_{[:(n_1-n_2)]})-H((Y_{2,D}^n)_{[:n_1-n_2]}| \mathbf{\tilde{X}}_2^n) \\
&\leq & n(n_1-n_2-n_E)^+
\end{IEEEeqnarray*}
and \begin{IEEEeqnarray*}{rCl}
&&H((\lfloor h_{E} X_{2,D}^n\rfloor )_{[:n_1-n_2]})-H(\lfloor h_{2} X_{2,D}^n\rfloor)\\
&\leq & n(\min\{n_1-n_2,n_E\}-n_2)\\
&=& n[n_E-n_2-(n_E-n_1+n_2)^+]^+
\end{IEEEeqnarray*}
and $H((Y_{1,D}^n)_{[(n_1-n_2)+1:]}|(Y_{1,D}^n)_{[:(n_1-n_2)]}) \leq nn_{2}$ which yields
\begin{IEEEeqnarray*}{rCl}
nR & \leq & nn_2+n(n_1-n_2-n_E)^+\\
&& +\:n[n_E-n_2-(n_E-n_1+n_2)^+]^++n(\epsilon+c)
\end{IEEEeqnarray*}

dividing by n and letting $n\rightarrow \infty$ shows the result.
\qed

%

\section{The Gaussian Multiple-Access Wiretap Channel}

In this section we analyse the Gaussian MAC-WT channel. 
As in the case for the WT channel with a Helper, we want to stick to the ideas of the corresponding linear deterministic model. This means we want to transfer the alignment and jamming structure to its Gaussian equivalent with layered lattice codes. This will lead to an achievable rate which is directly based on the deterministic rate. Moreover, we will make use of the previously developed ideas to convert the converse proof of the linear deterministic model, to the truncated model and therefore to the Gaussian model.

\subsection{Proof of Theorem 7: Achievable Sum-Rate for the G-MAC-WT}

We use the same framework as for the wiretap channel with a helper in section \ref{WT_Helper_Achievable_Scheme}. We partition the signal-to-noise ratio into intervals with power $\theta_{l}$, see eq. \eqref{theta_k_fractions}, where $l$ indicates the level. Each of these intervals plays the role of an $n_\Delta-$Interval of bit-levels in the linear deterministic scheme. We therefore partition the received power-to-noise ratio at $Y_1$ into intervals $\text{SNR}_1^{(1-\beta_1)}$. The users decompose the signals $X_i$ into a sum of independent sub-signals $X_i=\sum_{l=1}^{l_{max}}X_{il}$. And each signal uses the layered lattice codes as defined in section \ref{WT_Helper_Achievable_Scheme}.
Note that, w.l.o.g we look at the case $\text{SNR}_1>\text{SNR}_2$, which is $\beta_1<1$. Due to the symmetry of the users the case $\beta_1\geq 1$ follows immediately by interchanging both signals.
 As in the deterministic case, we have a private and common part. The common part is defined as the bit-levels $n_c:=\min\{n_E+n_\Delta,n_1\}$. The part $n_E+n_\Delta$ corresponds to $\text{SNR}_1^{\beta_2+(1-\beta_1)}$ in the Gaussian model. The opposing remainder is therefore $\text{SNR}_1^{\beta_1-\beta_2}$ and we get the common power-to-noise ratio as 
\begin{equation}
\text{SNR}_c:=\text{SNR}_1-\max\{1,\text{SNR}_1^{\beta_1-\beta_2}\},
\end{equation}
 while the private part has a power-to-noise ratio of 
 \begin{equation}
\text{SNR}_p:=\max\{1,\text{SNR}_1^{\beta_1-\beta_2}\}-1,
 \end{equation} 
 exactly as in the case of the wiretap channel with a helper. 
However, due to the modified scheme where both users jam and align their jamming signals at the legitimate receiver (see section \ref{Ach_Scheme_LD_MAC_WT}) we have a different number of used levels for messaging. We have
\begin{equation}
l_{\text{u}}:= 2\left\lfloor \frac{\min\{1+\beta_2-\beta_1,1\}}{3(1-\beta_1)}\right\rfloor
\end{equation} used levels for messaging, where
each one supports a rate of $r_l\geq \tfrac{1}{2}\log \text{SNR}_1^{(1-\beta_1)}-\tfrac{1}{2}$. And we therefore have a sum rate of 
\begin{equation}
r^c=l_{\text{u}}(\tfrac{1}{2}\log \text{SNR}_1^{(1-\beta_1)}-\tfrac{1}{2}),
\label{Gauss_rate_MAC_WT}
\end{equation} for the whole common alignment part.
Moreover, we need to consider the remainder term, which is allocated between the alignment structure and the noise floor or the private part. We see from the deterministic scheme that for $1-(\tfrac{3}{2}l_\text{u}+1)(1-\beta_1) < \min\{ \beta_1-\beta_2,0\}$ we can achieve a rate of
\begin{equation*}
r^R \geq \tfrac{1}{2}\log \text{SNR}_1^{1-\tfrac{3}{2}l_\text{u}(1-\beta_1)} - \tfrac{1}{2}\log \text{SNR}_1^{\min\{ \beta_1-\beta_2,0\}}-\tfrac{1}{2}.
\end{equation*} Moreover, for $1-(\tfrac{3}{2}l_\text{u}+2)(1-\beta_1) < \min\{ \beta_1-\beta_2,0\}\leq 1-(\tfrac{3}{2}l_\text{u}+1)(1-\beta_1) $
we have
\begin{equation*}
r^R \geq \tfrac{1}{2}\log \text{SNR}_1^{(1-\beta_1)}-\tfrac{1}{2},
\end{equation*} and for $\min\{ \beta_1-\beta_2,0\} \leq 1-(\tfrac{3}{2}l_\text{u}+2)(1-\beta_1)$ we have 
\begin{IEEEeqnarray*}{rCl}
r^R &\geq& \tfrac{1}{2}\log \text{SNR}_1^{(1-\beta_1)}+\tfrac{1}{2}\log \text{SNR}_1^{1-\tfrac{3}{2}l_\text{u}(1-\beta_1)}\\
&&\: - \tfrac{1}{2}\log \text{SNR}_1^{\min\{ \beta_1-\beta_2,0\}}-1.
\end{IEEEeqnarray*}
We therefore get a total rate of 
$
R_{\Sigma}=r^p+r^c+r^R.$
The case for $\text{SNR}_2> \text{SNR}_1$ can be shown similarly.
\qed

\subsection{Proof of Theorem 8: Sum-Rate Bound for the G-MAC-WT}
We use a similar approach as for the Gaussian WT with a helper, with the same framework developed in section \ref{Developing a Converse from LD-Bounds}. This means we also use the truncated deterministic model 
\begin{IEEEeqnarray}{rCl}\label{Truncated_Model2}
Y_{1,\text{D}}&=&\lfloor h_{1}X_{1,\text{D}}\rfloor +\lfloor h_{2}X_{2,\text{D}} \rfloor\IEEEyessubnumber\\
Y_{2,\text{D}}&=&\lfloor h_{E}X_{2,\text{D}} \rfloor + \lfloor h_{E}X_{1,\text{D}}\rfloor \IEEEyessubnumber,
\end{IEEEeqnarray} 
which can be shown to be within a constant gap to the Gaussian channel, see \eqref{Constant_Gap}.

\begin{proof} We begin with the following derivations
\begin{IEEEeqnarray}{rCl}
 \IEEEeqnarraymulticol{3}{l}{
n(R_{\Sigma}-\epsilon)}\\*\quad 
&=& I(W_1,W_2;Y_{1,\text{G}}^n)-I(W_1,W_2;Y_{2,\text{G}}^n)\IEEEnonumber\\
&\leq & I(W_1,W_2;Y_{1,\text{D}}^n)-I(W_1,W_2;Y_{2,\text{D}}^n)+nc\IEEEnonumber\\
&\leq & I(W_1,W_2;Y_{1,\text{D}}^n,Y_{2,\text{D}}^n)-I(W_1,W_2;Y_{2,\text{D}}^n)+nc\IEEEnonumber\\
&\leq & I(W_1,W_2;Y_{1,\text{D}}|Y_{2,\text{D}})+nc\IEEEnonumber\\
&\leq & I(X_{1,\text{D}}^n,X_{2,\text{D}}^n;Y_{1,\text{D}}^n|Y_{2,\text{D}}^n)+nc\IEEEnonumber\\
&=&H(Y_{1,\text{D}}^n|Y_{2,\text{D}}^n)
	-H(Y_{1,\text{D}}^n|Y_{2,\text{D}}^n,X_{1,\text{D}}^n,X_{2,\text{D}}^n)+nc\IEEEnonumber\\
&\overset{(b)}{=}&H(Y_{1,\text{D}}^n|Y_{2,\text{D}}^n)+nc\IEEEnonumber\\
&\overset{(c)}{\leq} & H(Y_{1,\text{D},c}^n|Y_{2,\text{D}})+H(Y_{1,\text{D},p}^n|Y_{2,\text{D}},Y_{1,\text{D},c})+nc
\label{UB:EQ1*}
\end{IEEEeqnarray}
 where we used basic techniques such as Fano's inequality and the chain rule. Step $(a)$ introduces the secrecy constraint \eqref{security_constraint}, while we used  the chain rule, non-negativity of mutual information and the data processing inequality in the following lines. Step $(b)$ follows from the fact that $Y_{1,\text{D}}^n$ is a function of $(X_{1,\text{D}}^n,X_{2,\text{D}}^n)$. Note that due to the definition of the common and the private part\footnote{The common part is defined as $Y_{1,\text{D},c}^n=(Y_{1,\text{D}}^n)_{[:n_c]}$, and the private part as $Y_{1,\text{D},p}^n=(Y_{1,\text{D}}^n)_{[n_c+1:]}$.} of $Y_{1,\text{D}}^n$, it follows that $H(Y_{1,\text{D},p}^n|Y_{2,\text{D}},Y_{1,\text{D},c})=0$ for $n_E\geq n_2$. For step $(c)$, we used lemma \ref{Lemma_floorsplitting}, the data-processing inequality and the chain-rule. We now extend the strategy of \cite{XieUlukusOneHop}, of bounding a single signal part, to asymmetrical channel gains
\begin{IEEEeqnarray}{rCl}
 \IEEEeqnarraymulticol{3}{l}{
n(R_1- \epsilon_3)}\\
&\leq & I(X_{1,\text{D}}^n;Y_{1,\text{D}}^n)\IEEEnonumber\\
&\leq & I(X_{1,\text{D}}^n;Y_{1,\text{D},c}^n)+I(X_{1,\text{D}}^n;Y_{1,\text{D},p}^n|Y_{1,\text{D},c}^n)\IEEEnonumber\\
&=& H(Y_{1,\text{D},c}^n)-H(Y_{1,\text{D},c}^n|X_{1,\text{D}}^n)+I(X_{1,\text{D}}^n;Y_{1,\text{D},p}^n|Y_{1,\text{D},c}^n)\IEEEnonumber\\
&=&H(Y_{1,\text{D},c}^n)-H((\lfloor h_2X_{2,\text{D}}^n \rfloor)_{[:n_c]})+I(X_{1,\text{D}}^n;Y_{1,\text{D},p}^n|Y_{1,\text{D},c}^n)\IEEEnonumber
\label{UB:EQ3*}
\end{IEEEeqnarray}
and it therefore holds that
\begin{IEEEeqnarray}{rCl}
H((\lfloor h_2X_{2,\text{D}}^n \rfloor)_{[:n_c]}) &\leq & H(Y_{1,\text{D},c}^n)\\
&&\:+I(X_{1,\text{D}}^n;Y_{1,\text{D},p}^n|Y_{1,\text{D},c}^n)-n(R_1-\epsilon_3)\IEEEnonumber.
\label{UB:EQ4*}
\end{IEEEeqnarray}
The same can be shown for $H((\lfloor h_2X_{1,\text{D}}^n \rfloor)_{[:n_c]})$, where it holds that
\begin{IEEEeqnarray}{rCl}
H((\lfloor h_2X_{1,\text{D}}^n\rfloor)_{[:n_c]}) &\leq & H(Y_{1,\text{D},c}^n)\\
&&\:+I(X_{2,\text{D}}^n;Y_{1,\text{D},p}^n|Y_{1,\text{D},c}^n)-n(R_2-\epsilon_3).\IEEEnonumber
\label{UB:EQ5*}
\end{IEEEeqnarray}
Moreover, we have that
\begin{IEEEeqnarray*}{rCl}
 \IEEEeqnarraymulticol{3}{l}{
I(X_{1,\text{D}}^n;Y_{1,\text{D},p}^n|Y_{1,\text{D},c}^n)+I(X_{2,\text{D}}^n;Y_{1,\text{D},p}^n|Y_{1,\text{D},c}^n)}\\
&=&\:2H(Y_{1,\text{D},p}^n|Y_{1,\text{D},c}^n)-H(Y_{1,\text{D},p}^n|Y_{1,\text{D},c}^n,X_{1,\text{D}}^n)\\
&&\:-H(Y_{1,\text{D},p}^n|Y_{1,\text{D},c}^n,X_{2,\text{D}}^n)\\
&=& 2H(Y_{1,\text{D},p}^n|Y_{1,\text{D},c}^n)-\:H((\lfloor h_2 X_{2,\text{D}}^n\rfloor)_{[n_c+1:]}|Y_{1,\text{D},c}^n)\\
&&\:-H((\lfloor h_1 X_{1,\text{D}}^n\rfloor)_{[n_c+1:]}|Y_{1,\text{D},c}^n)\\
&= &\:H(Y_{1,\text{D},p}^n|Y_{1,\text{D},c}^n)\IEEEyesnumber.
\label{UB:EQ10*}
\end{IEEEeqnarray*}
The key idea for the various cases is now to bound the term $H(Y_{1,\text{D},c}^n|Y_{2,\text{D}})$, or equivalently $H(Y_{1,\text{D}}^n|Y_{2,\text{D}})$ for $n_E>n_2$, in an appropriate way, to be able to use \eqref{UB:EQ4*} and \eqref{UB:EQ5*} on \eqref{UB:EQ1*}.
We start with the first case:
\subsubsection{Case $n_2\geq n_E$}
Here we have a none vanishing private part, due to the definition of $Y_{1,\text{D},c}^n$ and therefore need to bound the term $H(Y_{1,\text{D},c}^n|Y_{2,\text{D}})$. Note that due to the definition of $Y_{1,\text{D},c}^n$ and the specific case, we have that $H((\lfloor h_2X_{2,\text{D}}^n \rfloor)_{[:n_c]})=H(\lfloor h_EX_{2,D}^n \rfloor)$. We look into the first term of equation \eqref{UB:EQ1*} and show that
\begin{IEEEeqnarray*}{rCl}
 \IEEEeqnarraymulticol{3}{l}{
H(Y_{1,\text{D},c}^n|Y_{2,\text{D}})}\\
&=& H(Y_{1,\text{D},c}^n,Y_{2,\text{D}})-H(Y_{2,\text{D}})\\
&\leq & H(Y_{1,\text{D},c}^n,\lfloor h_{E}X_{2,\text{D}} \rfloor , \lfloor h_{E}X_{1,\text{D}}\rfloor)-H(Y_{2,\text{D}})\\
&=& H(\lfloor h_{E}X_{2,\text{D}} \rfloor,\lfloor h_{E}X_{1,\text{D}}\rfloor)-H(Y_{2,\text{D}})\\
&&\:+H(Y_{1,\text{D},c}^n|\lfloor h_{E}X_{2,\text{D}} \rfloor,\lfloor h_{E}X_{1,\text{D}}\rfloor)\\
& \leq &H(\lfloor h_{E}X_{1,\text{D}} \rfloor)+H(\lfloor h_{E}X_{2,\text{D}} \rfloor)-H(Y_{2,\text{D}}|X_{2,\text{D}})\\
&&+\:H(Y_{1,\text{D},c}^n|\lfloor h_{E}X_{2,\text{D}} \rfloor,\lfloor h_{E}X_{1,\text{D}}\rfloor)\\
&=& H(\lfloor h_{E}X_{2,\text{D}} \rfloor)+H(Y_{1,\text{D},c}^n|\lfloor h_{E}X_{2,\text{D}} \rfloor,\lfloor h_{E}X_{1,\text{D}}\rfloor).\IEEEyesnumber
\label{UB:EQ2*}
\end{IEEEeqnarray*}
Observe that the second term of equation \eqref{UB:EQ2*} is dependent on the specific regime. We can bound this term by
\begin{equation}
H(Y_{1,\text{D},c}^n|\lfloor h_{E}X_{2,\text{D}} \rfloor,\lfloor h_{E}X_{1,\text{D}}\rfloor)\leq n(n_c-n_E)= nn_\Delta.
\end{equation}
Note that the choice of $\lfloor h_{E}X_{2,\text{D}} \rfloor$ in \eqref{UB:EQ2*} as remaining signal part was arbitrary due to our assumption that both signals $\lfloor h_{E}X_{1,\text{D}} \rfloor$ and $\lfloor h_{E}X_{2,\text{D}} \rfloor$ have the same signal strength. Moreover, it follows on the same lines that
\begin{equation}
H(Y_{1,\text{D},c}^n|Y_{2,\text{D}})\leq H(\lfloor h_{E}X_{1,\text{D}} \rfloor)+nn_\Delta.
\label{UB:EQ6*}
\end{equation}
Looking at this result, its intuitive that one can also show the stronger result
\begin{equation}
H(Y_{1,\text{D},c}^n|Y_{2,\text{D}})\leq H((\lfloor h_{1}X_{1,\text{D}}\rfloor)_{[:n_c]})
\label{UB:EQ7*}
\end{equation}
for the case that $n_2 \geq n_E$.
This can be shown by considering a similar strategy as in \eqref{UB:EQ2*} 
\begin{IEEEeqnarray*}{rCl}
\IEEEeqnarraymulticol{3}{l}{
H(Y_{1,\text{D},c}^n|Y_{2,\text{D}})}\\* \quad 
&=& H(Y_{1,\text{D},c}^n,Y_{2,\text{D}})-H(Y_{2,\text{D}})\\
&\leq & H(Y_{2,\text{D}},(\lfloor h_{1}X_{1,\text{D}}\rfloor)_{[:n_c]},(\lfloor h_{2}X_{2,\text{D}}\rfloor)_{[:n_E]})-H(Y_{2,\text{D}})\\
&=& H((\lfloor h_{1}X_{1,\text{D}}\rfloor)_{[:n_c]},(\lfloor h_{2}X_{2,\text{D}}\rfloor)_{[:n_E]})-H(Y_{2,\text{D}})\\
&&\:+H(Y_{2,\text{D}}|(\lfloor h_{1}X_{1,\text{D}}\rfloor)_{[:n_c]},(\lfloor h_{2}X_{2,\text{D}}\rfloor)_{[:n_E]})\\
& \leq & H((\lfloor h_{1}X_{1,\text{D}}\rfloor)_{[:n_c]})+H((\lfloor h_{2}X_{2,\text{D}}\rfloor)_{[:n_E]})\\
&&\:-H(Y_{2,\text{D}}|X_{1,\text{D}})\\
&&+\:H(Y_{2,\text{D}}|(\lfloor h_{1}X_{1,\text{D}}\rfloor)_{[:n_c]},(\lfloor h_{2}X_{2,\text{D}}\rfloor)_{[:n_E]}))\\
&=& H((\lfloor h_{1}X_{1,\text{D}}\rfloor)_{[:n_c]})\\
&&\:+H(Y_{2,\text{D}}|(\lfloor h_{1}X_{1,\text{D}}\rfloor)_{[:n_c]},(\lfloor h_{2}X_{2,\text{D}}\rfloor)_{[:n_E]})),\IEEEyesnumber
\label{UB:EQ8*}
\end{IEEEeqnarray*} where 
\begin{IEEEeqnarray*}{rCl}
&&H(Y_{2,\text{D}}|(\lfloor h_{1}X_{1,\text{D}}\rfloor)_{[:n_c]},(\lfloor h_{2}X_{2,\text{D}}\rfloor)_{[:n_E]}))\IEEEyesnumber\\
&&\leq n(n_E-n_2)^+=0.
\label{UB:EQ9*}
\end{IEEEeqnarray*}
We combine one sum-rate inequality \eqref{UB:EQ1*} with \eqref{UB:EQ2*} and one with \eqref{UB:EQ8*}.
Moreover, we plug \eqref{UB:EQ4*} and \eqref{UB:EQ5*} into the corresponding bound, which yields
\begin{IEEEeqnarray*}{rCl}
n(2R_1+R_2 -\epsilon_6) &\leq & H(Y_{1,\text{D},c}^n)+I(X_{2,\text{D}}^n;Y_{1,\text{D},p}^n|Y_{1,\text{D},c}^n)\\
&&+\:H(Y_{1,\text{D},p}^n|Y_{2,\text{D}},Y_{1,\text{D},c})
\end{IEEEeqnarray*} and
\begin{IEEEeqnarray*}{rCl}
n(R_1+2R_2 -\epsilon_7) &\leq & H(Y_{1,\text{D},c}^n)+I(X_{1,\text{D}}^n;Y_{1,\text{D},p}^n|Y_{1,\text{D},c}^n)\\
&&+\:H(Y_{1,\text{D},p}^n|Y_{2,\text{D}},Y_{1,\text{D},c})+nn_\Delta.
\end{IEEEeqnarray*}
A summation of these results gives
 \begin{IEEEeqnarray*}{rCl}
 \IEEEeqnarraymulticol{3}{l}{
3n(R_1+R_2)-n\epsilon_8}\\* \quad
 &\leq & 2H(Y_{1,\text{D},c}^n)+I(X_{1,\text{D}}^n;Y_{1,\text{D},p}^n|Y_{1,\text{D},c}^n)\\
&&+\:I(X_{2,\text{D}}^n;Y_{1,\text{D},p}^n|Y_{1,\text{D},c}^n)+nn_\Delta\\
&&+\:2H(Y_{1,\text{D},p}^n|Y_{2,\text{D}},Y_{1,\text{D},c})\\
&=&2H(Y_{1,\text{D},c}^n)+H(Y_{1,\text{D},p}^n|Y_{1,\text{D},c}^n)+nn_\Delta\\
&&+\:2H(Y_{1,\text{D},p}^n|Y_{2,\text{D}},Y_{1,\text{D},c}),
\end{IEEEeqnarray*}
where we used \eqref{UB:EQ10*}. Now, because $H(Y_{1,\text{D},p}^n|Y_{2,\text{D}},Y_{1,\text{D},c})\leq nn_p$, $H(Y_{1,\text{D},p}^n|Y_{1,\text{D},c}^n)\leq nn_p$ and $H(Y_{1,\text{D},c}^n)\leq n_c$, we have 
\begin{IEEEeqnarray*}{rCl}
3n(R_1+R_2) -n\epsilon_8 &\leq & 2nn_c+3nn_p+nn_\Delta.
\end{IEEEeqnarray*}
Dividing by $3n$ and letting $n\rightarrow\infty$ shows the result.
\subsubsection{Case $n_E>n_2$}
First, we assume that $n_E \geq n_1$, and include a short proof for $n_1>n_E\geq n_2$ at the end of this subsection. For this case, the private part $Y_{1,\text{D},p}^n$ is zero, due to the definition of the private part and $n_E>n_2$. It follows that \eqref{UB:EQ1*} is
\begin{equation}
n(R_1+R_2)\leq H(Y_{1,\text{D}}^n|Y_{2,\text{D}}^n).
\label{UB:EQ11*}
\end{equation}
Moreover, we have that 
\begin{equation*}
H(\lfloor h_{2}X_{2,\text{D}}\rfloor)=H((\lfloor h_{2}X_{2,\text{D}}\rfloor)_{[:n_E]})\leq H(\lfloor h_{E}X_{2,\text{D}} \rfloor),
\end{equation*} which is why we need to bound \eqref{UB:EQ11*} by
 $H(\lfloor h_{1}X_{1,\text{D}}\rfloor)$ and $H(\lfloor h_{2}X_{2,\text{D}}\rfloor)$. We therefore modify \eqref{UB:EQ8*} to fit our case in the following way
\begin{IEEEeqnarray*}{rCl}
\IEEEeqnarraymulticol{3}{l}{
H(Y_{1,\text{D}}^n|Y_{2,\text{D}})}\\* \quad 
&=& H(Y_{1,\text{D}}^n,Y_{2,\text{D}})-H(Y_{2,\text{D}})\\
&\leq & H(Y_{2,\text{D}},\lfloor h_{1}X_{1,\text{D}}\rfloor,\lfloor h_{2}X_{2,\text{D}}\rfloor)-H(Y_{2,\text{D}})\\
&=& H(\lfloor h_{1}X_{1,\text{D}}\rfloor,\lfloor h_{2}X_{2,\text{D}}\rfloor)-H(Y_{2,\text{D}})\\
&&\:+H(Y_{2,\text{D}}|\lfloor h_{1}X_{1,\text{D}}\rfloor,\lfloor h_{2}X_{2,\text{D}}\rfloor)\\
&=& H(\lfloor h_{1}X_{1,\text{D}}\rfloor,\lfloor h_{2}X_{2,\text{D}}\rfloor)+H(Y_{2,\text{D},c}^n|\lfloor h_{1}X_{1,\text{D}}\rfloor,\lfloor h_{2}X_{2,\text{D}}\rfloor)\\
&&+\:H(Y_{2,\text{D},p}^n|\lfloor h_{1}X_{1,\text{D}}\rfloor,\lfloor h_{2}X_{2,\text{D}}\rfloor, Y_{2,\text{D},c}^n)\\
&&-\:H(Y_{2,\text{D},c}^n)-H(Y_{2,\text{D},p}^n|Y_{2,\text{D},c}^n)\\
&\leq &H(\lfloor h_{1}X_{1,\text{D}}\rfloor,\lfloor h_{2}X_{2,\text{D}}\rfloor)\\
&&\:+H(Y_{2,\text{D},c}^n|\lfloor h_{1}X_{1,\text{D}}\rfloor,\lfloor h_{2}X_{2,\text{D}}\rfloor)-H(Y_{2,\text{D},c}^n),
\label{UB:EQ12*}
\end{IEEEeqnarray*}
where $Y_{2,\text{D},c}^n=(Y_{2,\text{D}}^n)_{[:n_1]}$ and $Y_{2,\text{D},p}^n=(Y_{2,\text{D}}^n)_{[n_1+1:]}$.
Now, we can show that
\begin{IEEEeqnarray*}{rCl}
\IEEEeqnarraymulticol{3}{l}{
H(Y_{1,\text{D}}^n|Y_{2,\text{D}})}\\* \quad 
&\leq &H(\lfloor h_{1}X_{1,\text{D}}\rfloor,\lfloor h_{2}X_{2,\text{D}}\rfloor)\\
&&\:+H(Y_{2,\text{D},c}^n|\lfloor h_{1}X_{1,\text{D}}\rfloor,\lfloor h_{2}X_{2,\text{D}}\rfloor)-H(Y_{2,\text{D},c}^n),\\
&=&H(\lfloor h_{1}X_{1,\text{D}}\rfloor,\lfloor h_{2}X_{2,\text{D}}\rfloor)\\
&&\:-H(Y_{2,\text{D},c}^n)+H((\lfloor h_{E}X_{2,\text{D}}\rfloor)_{[:n_1]}|\lfloor h_{2}X_{2,\text{D}}\rfloor)\\
&\leq &H(\lfloor h_{1}X_{1,\text{D}}\rfloor)+H(\lfloor h_{2}X_{2,\text{D}}\rfloor)-H(Y_{2,\text{D},c}^n|X_{2,\text{D}})\\
&&+\:H((\lfloor h_{E}X_{2,\text{D}}\rfloor)_{[:n_1]}|\lfloor h_{2}X_{2,\text{D}}\rfloor)\\
&=& H(\lfloor h_{2}X_{2,\text{D}}\rfloor)+H((\lfloor h_{E}X_{2,\text{D}}\rfloor)_{[:n_1]}|\lfloor h_{2}X_{2,\text{D}}\rfloor)\\
&\leq &H(\lfloor h_{2}X_{2,\text{D}}\rfloor)+ nn_\Delta,\IEEEyesnumber
\label{UB:EQ13*}
\end{IEEEeqnarray*}where the last inequality follows because we have that 
\begin{IEEEeqnarray*}{rCl}
&&H((\lfloor h_{E}X_{2,\text{D}}\rfloor)_{[:n_1]}|\lfloor h_{2}X_{2,\text{D}}\rfloor)\\
&&\:=H((\lfloor h_{E}X_{2,\text{D}}\rfloor)_{[n_2+1:n_1]}|\lfloor h_{2}X_{2,\text{D}}\rfloor),\IEEEyesnumber
\label{UB:EQ15*}
\end{IEEEeqnarray*}
due to lemma \ref{Lemma_entropy_floorfunc_equal_bits} and the chain-rule.
Bounding $H(Y_{1,\text{D}}^n|Y_{2,\text{D}})$ by $H(\lfloor h_{1}X_{1,\text{D}}\rfloor)$ requires more work. We have a redundancy in the negative entropy terms, with which we can cancel the $H((\lfloor h_{E}X_{2,\text{D}}\rfloor)_{[:n_1]}|\lfloor h_{2}X_{2,\text{D}}\rfloor)$ term in the following way
\begin{IEEEeqnarray*}{rCl}
\IEEEeqnarraymulticol{3}{l}{
H(Y_{1,\text{D}}^n|Y_{2,\text{D}})}\\* \quad 
&\leq &H(\lfloor h_{1}X_{1,\text{D}}\rfloor,\lfloor h_{2}X_{2,\text{D}}\rfloor)\\
&&\:+H(Y_{2,\text{D},c}^n|\lfloor h_{1}X_{1,\text{D}}\rfloor,\lfloor h_{2}X_{2,\text{D}}\rfloor)-H(Y_{2,\text{D},c}^n),\\
&\leq &H(\lfloor h_{1}X_{1,\text{D}}\rfloor)+H(\lfloor h_{2}X_{2,\text{D}}\rfloor)-H((Y_{2,\text{D},c}^n)_{[:n_2]}|X_{1,\text{D}})\\
&&+\:H((\lfloor h_{E}X_{2,\text{D}}\rfloor)_{[:n_1]}|\lfloor h_{2}X_{2,\text{D}}\rfloor)\\
&&\:-H((Y_{2,\text{D},c}^n)_{[n_2+1:]}|X_{1,\text{D}},(Y_{2,\text{D},c}^n)_{[:n_2]})\\
&=& H(\lfloor h_{1}X_{1,\text{D}}\rfloor)-H((Y_{2,\text{D},c}^n)_{[n_2+1:]}|X_{1,\text{D}},(Y_{2,\text{D},c}^n)_{[:n_2]})\\
&&+\:H((\lfloor h_{E}X_{2,\text{D}}\rfloor)_{[:n_1]}|\lfloor h_{2}X_{2,\text{D}}\rfloor)\\
&\leq &H(\lfloor h_{1}X_{1,\text{D}}\rfloor)+H((\lfloor h_{E}X_{2,\text{D}}\rfloor)_{[:n_1]}|\lfloor h_{2}X_{2,\text{D}}\rfloor)\\
&&-\:H((Y_{2,\text{D},c}^n)_{[n_2+1:]}|X_{1,\text{D}},(Y_{2,\text{D},c}^n)_{[:n_2]},\lfloor h_{2}X_{2,\text{D}}\rfloor)\\
&=&H(\lfloor h_{1}X_{1,\text{D}}\rfloor)-H((Y_{2,\text{D},c}^n)_{[n_2+1:]}|X_{1,\text{D}},\lfloor h_{2}X_{2,\text{D}}\rfloor)\\
&&+\:H((\lfloor h_{E}X_{2,\text{D}}\rfloor)_{[:n_1]}|\lfloor h_{2}X_{2,\text{D}}\rfloor)\\
&=& H(\lfloor h_{1}X_{1,\text{D}}\rfloor)-H((\lfloor h_{E}X_{2,\text{D}}\rfloor)_{[n_2+1:n_1]}|\lfloor h_{2}X_{2,\text{D}}\rfloor)\\
&&\:+H((\lfloor h_{E}X_{2,\text{D}}\rfloor)_{[:n_1]}|\lfloor h_{2}X_{2,\text{D}}\rfloor)\\
&= &H(\lfloor h_{1}X_{1,\text{D}}\rfloor),\IEEEyesnumber
\label{UB:EQ14*}
\end{IEEEeqnarray*}
where the last step follows due to equation \eqref{UB:EQ15*}.
Now we can bound one \eqref{UB:EQ11*} with \eqref{UB:EQ13*} and one with \eqref{UB:EQ14*}. Moreover, we use \eqref{UB:EQ4*} and \eqref{UB:EQ5*} on the result. Note that due to our regime, \eqref{UB:EQ4*} becomes
\begin{equation}
H(\lfloor h_{2}X_{2,\text{D}}\rfloor) \leq H(Y_{1,\text{D}}^n)-n(R_1+\epsilon_3),
\end{equation}
while \eqref{UB:EQ5*} becomes
\begin{equation}
H(\lfloor h_{1}X_{1,\text{D}}\rfloor) \leq H(Y_{1,\text{D}}^n)-n(R_2+\epsilon_4).
\end{equation}
Putting everything together results in
 \begin{IEEEeqnarray*}{rCl}
3n(R_1+R_2)-n\epsilon_8 &\leq & 2nn_c+nn_\Delta.
\end{IEEEeqnarray*}
Dividing by $3n$ and letting $n\rightarrow\infty$ shows the result.
We need to modify a bound on $H(Y_{1,\text{D}}^n|Y_{2,\text{D}})$, if the signal strength $n_E$ lies in between $n_1$ and $n_2$. In \eqref{UB:EQ13*}, we see that 
\begin{equation}
H(\lfloor h_{1}X_{1,\text{D}}\rfloor)-H(Y_{2,\text{D},c}^n|X_{2,\text{D}})\leq n(n_1-n_E)^+.
\end{equation}
Moreover, we have that 
\begin{equation}
H((\lfloor h_{E}X_{2,\text{D}}\rfloor)_{[:n_1]}|\lfloor h_{2}X_{2,\text{D}}\rfloor)\leq n(n_E-n_2)^+.
\end{equation}
Both changes cancel and we get the same result as \eqref{UB:EQ13*}. The result follows on the same lines as in the previous derivation.
\end{proof}


\section{Conclusions}
We have shown an achievable scheme for both the Gaussian multiple-access wiretap channel and the Gaussian wiretap channel with a helper.
We used the linear deterministic approximation of both models to gain insights into the structure and devised novel achievable schemes based on orthogonal bit-level alignment to achieve secrecy. These techniques can be summarized as signal-scale alignment methods, where we used jamming alignment at the eavesdropper in the signal-scale, while minimizing the negative effect at the legitimate receiver. Both results were then transferred to the Gaussian model, by utilizing layered lattice coding. Moreover, we developed converse proofs for both models, which achieve a constant-gap bound for certain signal power regimes. Those converse techniques were developed for the LD model and then transferred to a truncated deterministic model, which in turn is within a constant-gap of the integer-input integer-output model. The integer-input integer-output model yields converse proofs for the Gaussian models, by invoking a result of \cite{Mukherjee2017}. Since our results hold for asymmetrical channel gains and are dependent on those ratios, they give insights into the secure g.d.o.f. and converge to the known s.d.o.f. results for the channel gain ratio approaching one. 



\bibliographystyle{./IEEEtran}
\bibliography{./ref}

\begin{thebibliography}{10}
\providecommand{\url}[1]{#1}
\csname url@samestyle\endcsname
\providecommand{\newblock}{\relax}
\providecommand{\bibinfo}[2]{#2}
\providecommand{\BIBentrySTDinterwordspacing}{\spaceskip=0pt\relax}
\providecommand{\BIBentryALTinterwordstretchfactor}{4}
\providecommand{\BIBentryALTinterwordspacing}{\spaceskip=\fontdimen2\font plus
\BIBentryALTinterwordstretchfactor\fontdimen3\font minus
  \fontdimen4\font\relax}
\providecommand{\BIBforeignlanguage}[2]{{%
\expandafter\ifx\csname l@#1\endcsname\relax
\typeout{** WARNING: IEEEtran.bst: No hyphenation pattern has been}%
\typeout{** loaded for the language `#1'. Using the pattern for}%
\typeout{** the default language instead.}%
\else
\language=\csname l@#1\endcsname
\fi
#2}}
\providecommand{\BIBdecl}{\relax}
\BIBdecl

\bibitem{FW16}
R.~Fritschek and G.~Wunder, ``{Towards a Constant-gap Sum-capacity Result for
  the Gaussian Wiretap Channel with a Helper},'' in \emph{IEEE International
  Symposium on Information Theory (ISIT)}, July 2016, pp. 2978--2982.

\bibitem{FW17axvMACWT}
------, ``{On the Deterministic Sum-Capacity of the Multiple Access Wiretap
  Channel},'' \emph{arXiv preprint arXiv:1701.07380}, 2017.

\bibitem{Wyner75}
A.~D. Wyner, ``{The Wire-Tap Channel},'' \emph{Bell System Technical Journal},
  vol.~54, no.~8, pp. 1355--1387, 1975.

\bibitem{CsizarKoerner}
I.~Csiszar and J.~Korner, ``{Broadcast Channels with Confidential Messages},''
  \emph{IEEE Transactions on Information Theory}, vol.~24, no.~3, pp. 339--348,
  May 1978.

\bibitem{Hellman}
S.~Leung-Yan-Cheong and M.~Hellman, ``{The Gaussian Wire-tap Channel},''
  \emph{IEEE Transactions on Information Theory}, vol.~24, no.~4, pp. 451--456,
  Jul 1978.

\bibitem{TekinYenerGMAC-WT}
E.~Tekin and A.~Yener, ``{The Gaussian Multiple Access Wire-Tap Channel},''
  \emph{IEEE Transactions on Information Theory}, vol.~54, no.~12, pp.
  5747--5755, Dec 2008.

\bibitem{motahari2014real}
A.~S. Motahari, S.~Oveis-Gharan, M.-A. Maddah-Ali, and A.~K. Khandani, ``{Real
  Interference Alignment: Exploiting the Potential of Single Antenna
  Systems},'' \emph{IEEE Transactions on Information Theory}, vol.~60, no.~8,
  pp. 4799--4810, 2014.

\bibitem{CadambeJafarIA}
V.~Cadambe and S.~Jafar, ``{Interference Alignment and Degrees of Freedom of
  the $K$-User Interference Channel},'' \emph{IEEE Transactions on Information
  Theory}, vol.~54, no.~8, pp. 3425--3441, Aug 2008.

\bibitem{Maddah-AliKhandi-IA}
M.~Maddah-Ali, A.~Motahari, and A.~Khandani, ``{Communication Over MIMO X
  Channels: Interference Alignment, Decomposition, and Performance Analysis},''
  \emph{IEEE Transactions on Information Theory}, vol.~54, no.~8, pp.
  3457--3470, Aug 2008.

\bibitem{Niesen-Ali}
U.~Niesen and M.~Maddah-Ali, ``{Interference Alignment: From Degrees of Freedom
  to Constant-Gap Capacity Approximations},'' \emph{IEEE Transactions on
  Information Theory}, vol.~59, no.~8, pp. 4855--4888, Aug 2013.

\bibitem{Avestimehr2011}
S.~Avestimehr, S.~N. Diggavi, and D.~N.~C. Tse, ``{Wireless Network Information
  Flow: A Deterministic Approach},'' \emph{IEEE Transactions on Information
  Theory}, vol.~57, no.~4, pp. 1872--1905, 2011.

\bibitem{Etkin2008}
R.~Etkin, D.~Tse, and H.~Wang, ``{Gaussian Interference Channel Capacity to
  within One Bit},'' \emph{IEEE Transactions on Information Theory}, vol.~54,
  no.~12, pp. 5534--5562, 2008.

\bibitem{Bresler2008}
G.~Bresler and D.~Tse, ``{The Two-user Gaussian Interference Channel: A
  Deterministic View},'' \emph{European Transactions on Telecommunications},
  vol.~19, no.~4, pp. 333--354, 2008.

\bibitem{Bresler2010}
G.~Bresler, A.~Parekh, and D.~Tse, ``{The Approximate Capacity of the
  Many-to-One and One-to-Many Gaussian Interference Channels},'' \emph{IEEE
  Transactions on Information Theory}, vol.~56, no.~9, pp. 4566--4592, 2010.

\bibitem{Suvarup2011}
S.~Saha and R.~A. Berry, ``{Sum-Capacity of a Class of $K$-user Gaussian
  Interference Channels within $O(K \log K)$ bits},'' \emph{Allerton Conf.
  2011}, 2011.

\bibitem{FW17b}
R.~Fritschek and G.~Wunder, ``{On Multiuser Gain and the Constant-Gap Sum
  Capacity of the Gaussian Interfering Multiple Access Channel},'' \emph{arXiv
  preprint arXiv:1705.04514}, 2017.

\bibitem{TekinYenerCoopJam}
E.~Tekin and A.~Yener, ``{The General Gaussian Multiple-Access and Two-Way
  Wiretap Channels: Achievable Rates and Cooperative Jamming},'' \emph{IEEE
  Transactions on Information Theory}, vol.~54, no.~6, pp. 2735--2751, June
  2008.

\bibitem{XieUlukusOneHop}
J.~Xie and S.~Ulukus, ``{Secure Degrees of Freedom of One-Hop Wireless
  Networks},'' \emph{IEEE Transactions on Information Theory}, vol.~60, no.~6,
  pp. 3359--3378, June 2014.

\bibitem{XieUlukusWiretap-Helper}
------, ``{Secure Degrees of Freedom of the Gaussian Wiretap Channel with
  Helpers},'' in \emph{Proc. Allerton Conference on Communication, Control, and
  Computing}, Oct 2012, pp. 193--200.

\bibitem{XieUlukusWiretap-Helper2}
------, ``{Secure Degrees of Freedom of the Gaussian Wiretap Channel with
  Helpers and No Eavesdropper CSI: Blind Cooperative Jamming},'' in \emph{2013
  47th Annual Conference on Information Sciences and Systems (CISS)}, March
  2013, pp. 1--5.

\bibitem{Comp_forward_mac_wt}
P.~Babaheidarian, S.~Salimi, and P.~Papadimitratos, ``{Finite-SNR Regime
  Analysis of the Gaussian Wiretap Multiple-access Channel},'' in \emph{2015
  53rd Annual Allerton Conference on Communication, Control, and Computing
  (Allerton)}, Sept 2015, pp. 307--314.

\bibitem{Shamai12}
M.~El-Halabi, T.~Liu, C.~N. Georghiades, and S.~Shamai, ``{Secret Writing on
  Dirty Paper: A Deterministic View},'' \emph{IEEE Transactions on Information
  Theory}, vol.~58, no.~6, pp. 3419--3429, June 2012.

\bibitem{VogtSezgin14}
Y.~Chen, H.~Vogt, and A.~Sezgin, ``{Gaussian Wiretap Channels with Correlated
  Sources: Approaching Capacity Region within a Constant Gap},'' in \emph{2014
  IEEE International Conference on Communications Workshops (ICC)}, June 2014,
  pp. 794--799.

\bibitem{PeronDiggaviTelatar09}
E.~Perron, S.~Diggavi, and E.~Telatar, ``{On Cooperative Wireless Network
  Secrecy},'' in \emph{IEEE INFOCOM 2009}, April 2009, pp. 1935--1943.

\bibitem{MM13}
P.~Mohapatra and C.~R. Murthy, ``{Secrecy in the 2-user Symmetric Deterministic
  Interference Channel with Transmitter Cooperation},'' in \emph{2013 IEEE 14th
  Workshop on Signal Processing Advances in Wireless Communications (SPAWC)},
  June 2013, pp. 270--274.

\bibitem{Vogt16}
H.~Vogt, Z.~H. Awan, and A.~Sezgin, ``{On Deterministic IC with Common and
  Private Message Under Security Constraints},'' in \emph{2016 IEEE Global
  Conference on Signal and Information Processing (GlobalSIP)}, Dec 2016, pp.
  947--952.

\bibitem{LeeKhisti17}
S.~H. Lee, W.~Zhao, and A.~Khisti, ``{Secure Degrees of Freedom of the Gaussian
  Diamond-Wiretap Channel},'' \emph{IEEE Transactions on Information Theory},
  vol.~63, no.~1, pp. 496--508, Jan 2017.

\bibitem{Mukherjee2017}
P.~Mukherjee, J.~Xie, and S.~Ulukus, ``{Secure Degrees of Freedom of One-Hop
  Wireless Networks With No Eavesdropper CSIT},'' \emph{IEEE Transactions on
  Information Theory}, vol.~63, no.~3, pp. 1898--1922, March 2017.

\bibitem{CadambeJafarWangonMadsen}
V.~R. Cadambe, S.~A. Jafar, and C.~Wang, ``{Interference Alignment with
  Asymmetric Complex Signaling - Settling the Host-Madsen-Nosratinia
  Conjecture},'' \emph{IEEE Transactions on Information Theory}, vol.~56,
  no.~9, pp. 4552--4565, Sept 2010.

\bibitem{forney2004}
G.~D. Forney~Jr, ``{On the Role of MMSE Estimation in Approaching the
  Information-theoretic Limits of Linear Gaussian Channels: Shannon Meets
  Wiener},'' \emph{arXiv preprint cs/0409053}, 2004.

\bibitem{UrezZamir}
U.~Urez and R.~Zamir, ``{Achieving $\frac{1}{2}\log(1+SNR)$ on the AWGN Channel
  With Lattice Encoding and Decoding},'' \emph{IEEE Transactions on Information
  Theory}, vol.~50, no.~10, 2004.

\bibitem{Davoodi2016}
A.~G. Davoodi and S.~A. Jafar, ``{Aligned Image Sets Under Channel Uncertainty:
  Settling Conjectures on the Collapse of Degrees of Freedom Under Finite
  Precision CSIT},'' \emph{IEEE Transactions on Information Theory}, vol.~62,
  no.~10, pp. 5603--5618, Oct 2016.

\bibitem{CGeng2015}
C.~Geng, H.~Sun, and S.~A. Jafar, ``{On the Optimality of Treating Interference
  as Noise: General Message Sets},'' \emph{IEEE Transactions on Information
  Theory}, vol.~61, no.~7, pp. 3722--3736, July 2015.

\bibitem{Fritschek2014}
R.~Fritschek and G.~Wunder, ``{Upper Bounds and Duality Relations of the Linear
  Deterministic Sum Capacity for Cellular Systems},'' in \emph{Proc. IEEE
  International Conference on Communications (ICC)}, Sydney, Australia, 2014.

\end{thebibliography}

\end{document}